\documentclass[aip,jmp,amsmath,amssymb,preprint]{revtex4-2}

\input{glyphtounicode}
\usepackage{amsthm}
\usepackage{array}
\usepackage{booktabs}
\usepackage{enumitem}
\usepackage{placeins}
\usepackage{microtype}
\usepackage{setspace}
\usepackage{graphicx}
\usepackage{tikz}
\usetikzlibrary{arrows.meta}
\usepackage{hyperref}
\hypersetup{
  colorlinks=true,linkcolor=blue,citecolor=blue,urlcolor=blue,
  pdftitle={A Finite-Lattice Model from a Reciprocal Cost Action: Spectral and Reflection-Positivity Properties},
  pdfauthor={Jonathan Washburn and Megan Simons},
  pdfsubject={mathematical physics; lattice statistical mechanics; reflection positivity}
}

\newtheorem{theorem}{Theorem}[section]
\newtheorem{lemma}[theorem]{Lemma}
\newtheorem{proposition}[theorem]{Proposition}
\newtheorem{corollary}[theorem]{Corollary}
\newtheorem{definition}[theorem]{Definition}
\newtheorem{remark}[theorem]{Remark}
\newtheorem{openproblem}[theorem]{Open Problem}

\newcommand{\CBcond}[2]{\mathrm{CB}(#1,#2)}

\begin{document}

\title{A Finite-Lattice Model from a Reciprocal Cost Action:
Spectral and Reflection-Positivity Properties}

\author{Jonathan Washburn}
\email{jon@recognitionphysics.org}
\affiliation{Recognition Physics Institute, Austin, TX, USA}

\author{Megan Simons}
\email[Corresponding author: ]{msimons@recognitionphysics.org}
\affiliation{Recognition Physics Institute, Austin, TX, USA}

\date{June 2026}

\begin{abstract}
We study the finite-lattice statistical-mechanical model whose
nearest-neighbor bond potential is the reciprocal cost
$J(e^\varepsilon)=\cosh\varepsilon-1$, selected by the d'Alembert
functional equation under the stated regularity and calibration
assumptions.  The structural inputs are stated explicitly; once they are
fixed, the analysis is rigorous mathematics about the bond action
$V(\Delta\phi)=\cosh(\Delta\phi)-1$ on finite boxes in
$\mathbb Z^3\times\mathbb Z/8\mathbb Z$.  Our main result pairs a negative
and a positive statement about reflection positivity.  For the continuous
noncompact model the natural temporal kernel $K(u)=\exp[-(\cosh u-1)]$
fails the Bochner positive-definiteness test (an interval-certified
quadrature gives $\widetilde K(3)<0$), so the standard Bochner route to
Osterwalder--Schrader reflection positivity is obstructed.  For a
finite-alphabet variant, with field values restricted to a finite
symmetric set $\Phi=v_0\{-N,\ldots,N\}$, reflection positivity holds
whenever the finite crossing-bond Toeplitz matrix
$(K_{\Phi(v_0,N)})_{a,b}:=K(b-a)$, $a,b\in\Phi$, is positive semidefinite; for
$v_0\in\{1.2,1.5,2.5\}$ this is discharged by a rigorous
diagonal-dominance certificate uniform in $N$, and the associated
one-step transfer operator is then positive and self-adjoint in an
explicit reflection-positivity inner product.  These finite-volume
results do not provide a continuum Wightman theory,
Osterwalder--Schrader reconstruction, LSZ scattering, or a continuum
mass gap.  Ancillary algebraic observations appear in
Appendix~\ref{app:ancillary} and do not enter the reflection-positivity
results.
\end{abstract}

\keywords{lattice statistical mechanics; reflection positivity; transfer
matrix; functional equations; d'Alembert composition law; Toeplitz
positive definiteness; noncompact-spin models; discrete-field cutoff;
MSC 2020: 82B20, 81T25, 39B22, 15B05}

\maketitle

\section{Introduction}
\label{sec:intro}

The main mathematical object in this paper is a fixed nonlinear
nearest-neighbor lattice action:
\[
  V(\Delta\phi)=\cosh(\Delta\phi)-1 .
\]
Unlike a phenomenological lattice action chosen from a family of
potentials, this bond term is selected by the d'Alembert functional
equation for a reciprocal cost, under the stated regularity and
calibration assumptions.  We analyze the finite-lattice spectral and
reflection-positivity structure attached to this action.  Throughout,
the object of study is a classical Euclidean lattice
statistical-mechanical model, and quantum-mechanical vocabulary appears
only as interpretation: no Osterwalder--Schrader reconstruction to a
quantum theory is established (Section~\ref{sec:rp-transfer}).

The paper's central contribution is a paired negative and positive
result for this action.  On the negative side, the natural noncompact
temporal kernel $K(u)=\exp[-(\cosh u-1)]$ is not positive
definite: an interval-certified quadrature gives $\widetilde K(3)<0$
(Lemma~\ref{thm:fourier-negative}), so the standard Bochner route to
Osterwalder--Schrader reflection positivity is obstructed for the
continuous model.  On the positive side, restricting the field to a
finite symmetric alphabet yields reflection positivity for this
finite-alphabet variant whenever a finite crossing-bond Toeplitz matrix
is positive semidefinite (Theorem~\ref{thm:rp-discrete}); at the explicit spacings
$v_0\in\{1.2,1.5,2.5\}$ this hypothesis is discharged by a
diagonal-dominance certificate uniform in the alphabet size $N$
(Theorem~\ref{thm:gershgorin}, Corollary~\ref{cor:rp-verified}), and the
induced one-step transfer operator is positive self-adjoint in an
explicit reflection-positivity inner product
(Corollary~\ref{cor:transfer-psd}).  The remaining spectral and
algebraic facts collected here are elementary or standard and are
recorded for completeness.

Recognition Science
(RS)~\cite{Washburn2026Axioms,Washburn2026Coercive,PardoGuerra2026}
supplies the cost functional and the structural inputs used below:
observables are represented as positive comparison ratios
$x\in\mathbb R_{>0}$, the Recognition Composition Law fixes the
reciprocal cost, and the period~$8$, the value~$\varphi$, and the
spatial dimension $D=3$ are taken as model-defining inputs.  Once these
data are specified, the results proved here are finite-lattice or
finite-dimensional statements about the fixed bond action
$\cosh(\Delta\phi)-1$.

In previous work~\cite{Washburn2026Axioms,Washburn2026Coercive} we
introduced the \emph{Recognition Composition Law} (RCL),
\begin{equation}\label{eq:rcl}
  J(xy) + J(x/y) \;=\; 2\,J(x)\,J(y) + 2\,J(x) + 2\,J(y),
\end{equation}
as the composition axiom for joint cost.  Under the substitution
$H:=1+J\circ\exp$ the RCL is the d'Alembert functional equation, whose
calibrated continuous solutions are classical
(Acz\'el~\cite{Aczel1966}); it therefore uniquely determines the
canonical reciprocal cost
\begin{equation}\label{eq:jcost}
  J(x) = \tfrac{1}{2}(x + x^{-1}) - 1
\end{equation}
under the calibration A1--A3, as carried out in
Section~\ref{sec:forcing} (step~T5).  The bilinear combiner on the right
of~\eqref{eq:rcl} is itself the unique symmetric polynomial combiner of
degree at most two compatible with this
composition~\cite{Washburn2026DAlembert}, so the form of the law is fixed
by symmetry and degree rather than chosen ad hoc.
Separately from~\eqref{eq:jcost}, the RS framework also fixes the golden
ratio $\varphi=(1+\sqrt{5})/2$ used
below~\cite{Washburn2026Axioms,PardoGuerra2026}.

The paired Bochner failure and discrete-field reflection positivity of
Section~\ref{sec:rp-transfer} constitute the central result; the cyclic-shift
representation and action identity in Section~\ref{sec:complex} are
supporting facts, and the ancillary algebraic observations in
Appendix~\ref{app:ancillary} are not used in the reflection-positivity
argument.

\subsection{Related work}
\label{sec:related}

Most individual ingredients below are standard, and we attribute them
explicitly; the novelty is their assembly around the fixed
$\cosh-1$ bond action and, concretely, the Bochner failure of the
noncompact cosh kernel together with the discrete-field positive result
and its spacing threshold.  The functional equation~\eqref{eq:rcl} is the d'Alembert
equation, whose continuous solutions are classical
(Acz\'el~\cite{Aczel1966}); the canonical reciprocal cost~\eqref{eq:jcost}
is the corresponding classification, with a log-calibration
axiom~\cite{Washburn2026Axioms}; the admissible symmetric polynomial
combiners that reduce to d'Alembert form are classified
in~\cite{Washburn2026DAlembert}; and coercive projection methods for
finite ledger data appear in~\cite{Washburn2026Coercive}.  Diagonalizing
a cyclic shift by the discrete Fourier transform is textbook linear
algebra; the closed form of $\widetilde K$ as a modified Bessel function
of imaginary order follows from a standard integral
representation~\cite{Watson1944,DLMF}.  The reflection-positivity
machinery (Osterwalder--Schrader
axioms~\cite{OS1973,OS1975}, the product/Gram argument for reflection
positivity of nearest-neighbor lattice
models~\cite{FIS1978,Glimm1987,Biskup2009}, lattice transfer
matrices~\cite{OsterwalderSeiler1978}, and the general theory of lattice
systems, Gibbs measures, and noncompact
spins~\cite{Simon1993,Georgii2011,FriedliVelenik2017,LebowitzPresutti1976})
is
likewise standard.
The temporal kernel $K(u)=\exp[-(\cosh u-1)]$ is noncompact in field
space; analytical issues for noncompact lattice spin systems are
classical~\cite{LebowitzPresutti1976,Simon1993,Georgii2011}, and the discrete-field
restriction in Section~\ref{subsec:discrete-rescue} is the device by
which the present construction sidesteps the failure of the standard
Bochner positivity criterion for the crossing-bond kernel.  Our contribution on
this point is to record that the route fails for the cosh kernel
($\widetilde K(\xi)<0$ near $\xi\approx 3$) and to give a rigorous
diagonal-dominance certificate for the finite-alphabet variant at
explicit spacings.
For the sector measure, Appendix~\ref{sec:born} proves an
extension/consistency result: under the stated two-branch calibration,
$\sum_k|\psi_k|^2$ is the unique additive, phase-invariant,
coordinate-symmetric measure.  It does not derive the Born rule from
$J$-cost Gibbs weights.  Derivations of the Born rule
from weaker axioms are a distinct and substantial literature: Gleason's
theorem~\cite{Gleason1957}, the measurement-theoretic
approach~\cite{Busch2003}, Zurek's envariance
argument~\cite{Zurek2005}, and the decision-theoretic
Deutsch--Wallace program~\cite{Wallace2012}.  Our assumed two-branch
calibration is logically stronger than the hypotheses of those works and
should not be read as competing with them.
Proofs are given in the text.  Selected elementary facts are also
formalized in Lean~4 (Appendix~\ref{app:lean}).

\subsection{Notation and conventions}
\label{sec:notation}

\begin{itemize}[nosep]
  \item $\mathbb R_{>0}=(0,\infty)$.
  \item $\varphi=(1+\sqrt5)/2$.
  \item Unit system.  We work in units
        $\text{l}_0=\tau_0=1$, so $c=\text{l}_0/\tau_0=1$ and the action
        $S$ is dimensionless.  The action-to-phase constant
        $\hbar_{\rm RS}$ in $e^{iS/\hbar_{\rm RS}}$ enters no theorem below
        and may be set to $1$.
  \item The 8-tick index set is $\{0,1,\ldots,7\}$ with arithmetic mod~8.
  \item A hat, as in $\widehat f$, denotes the finite DFT-8 transform of
        an 8-mode signal.  A tilde, as in $\widetilde K$, denotes the
        continuous Fourier transform of a kernel on $\mathbb R$.
  \item For a lattice field $\phi$, $\Delta_\mu\phi(x):=\phi(x+\hat\mu)-\phi(x)$.
  \item Normalized 8-mode signals satisfy $\sum_{k=0}^7|\psi_k|^2=1$.
  \item ``Structural input'' denotes a hypothesis established in the RS
        framework and taken as given here, rather than derived
        from~\eqref{eq:jcost} alone (T6--T8).
  \item ``External theorem'' invokes a standard analytic result from the
        literature, with hypotheses verified in Lean or in the text.
\end{itemize}

\subsection{Roadmap}

Section~\ref{sec:related} situates the construction in the lattice
statistical-mechanics literature, and Section~\ref{sec:forcing} records
the model's provenance, separating the analytically fixed cost law~(T5),
with its elementary corollaries, from the structural inputs (T6--T8).
Section~\ref{sec:complex} records the cyclic-shift spectral
representation and the action identity
$J(e^\varepsilon)=\cosh\varepsilon-1$, and
Section~\ref{sec:rp-transfer} (the bulk of the paper) develops
reflection positivity: the noncompact Bochner failure, the
conditional discrete-field criterion and its diagonal-dominance
certificate, and the finite-box transfer kernel.
Section~\ref{sec:limits} collects what is established and what remains
open, including Open Problem~\ref{op:continuum-scaling} on continuum
scaling, and Section~\ref{sec:conclusion} concludes.
Appendix~\ref{app:fourier-cert} gives the interval certificate for the
sign of $\widetilde K(3)$; Appendices~\ref{app:reproducibility}
and~\ref{app:lean} collect reproducibility and formal-verification
information; and Appendix~\ref{app:ancillary} records ancillary algebraic
observations not used in the reflection-positivity results.

\section{Structural Inputs: Cost Law and Discrete Substrate}
\label{sec:forcing}

The object of study is a classical Euclidean lattice model on
$\mathbb Z^3\times\mathbb Z/8\mathbb Z$ with nearest-neighbor bond action
$\cosh(\Delta\phi)-1$.  This section records its provenance: the bond
action is fixed analytically by the cost functional $J$ (the
d'Alembert classification, T5), while the discrete geometry (spatial dimension $D=3$ and
temporal period $8$) together with the scaling constant $\varphi$ enters
as three structural inputs (T6--T8) established in the Recognition
Science (RS) framework of~\cite{Washburn2026Axioms,PardoGuerra2026}.  A
reader interested only in the finite-lattice results may take the model
as defined by this action and geometry, with the period $8$ and $D=3$
playing the role of boundary conditions of the kind standard in lattice
field theory.

\subsection{The composition law and the d'Alembert classification (T5)}

We adopt three hypotheses on $J:\mathbb R_{>0}\to\mathbb R$:
\begin{align}
  &\text{A1 (Normalization):} && J(1) = 0, \label{eq:A1}\\
  &\text{A2 (Composition):} && J(xy) + J(x/y) = 2J(x)J(y) + 2J(x) + 2J(y), \label{eq:A2}\\
  &\text{A3 (Calibration):} && J''_{\log}(0) = 1. \label{eq:A3}
\end{align}
We denote these hypotheses~\eqref{eq:A1}--\eqref{eq:A3}.
Here $J_{\log}(t):=J(e^t)$.  The list is not minimal:
\eqref{eq:A1} follows from~\eqref{eq:A2} and~\eqref{eq:A3}.  Setting
$x=y=1$ in~\eqref{eq:A2} gives $2J(1)=2J(1)^2+4J(1)$, so
$J(1)\in\{0,-1\}$; the branch $J(1)=-1$ is the trivial constant solution
$J\equiv-1$, whose log-curvature $J_{\log}''(0)=0$ is incompatible with
the calibration~\eqref{eq:A3}, leaving $J(1)=0$.  This dependence is
established in~\cite{Washburn2026Axioms}; we keep~\eqref{eq:A1} in the
list for readability.  We read~\eqref{eq:A3} in the regularizing
sense of~\cite{Washburn2026Axioms}: the log-curvature calibration is
assumed together with the local regularity of $J_{\log}$ near $0$ needed
to apply the continuous d'Alembert classification.  In particular, this
excludes the pathological non-measurable solutions admitted by the bare
functional equation~\eqref{eq:A2}, so~\eqref{eq:A3} is treated here as an
explicit regularity-and-calibration hypothesis rather than a tacit
smoothness assumption.  The substitution
$H(t):=1+J(e^t)$ turns
~\eqref{eq:A2} verbatim into the d'Alembert functional equation
$H(s+t)+H(s-t)=2H(s)H(t)$.  Its continuous solutions with $H(0)=1$ are
$H(t)=\cosh(\lambda t)$, $H(t)=\cos(\beta t)$, or $H\equiv1$
(Acz\'el~\cite{Aczel1966}); all are automatically even.  The
calibration~\eqref{eq:A3} gives $H''(0)=J''_{\log}(0)=1>0$, whose positive
sign excludes the oscillatory branch $\cos(\beta t)$ (with
$H''(0)=-\beta^2<0$) and the constant branch (with $H''(0)=0$), and whose
value $1$ fixes $\lambda=1$ in the surviving hyperbolic branch, giving
\begin{equation}\label{eq:J-unique}
  J(x)=\tfrac12(x+x^{-1})-1.
\end{equation}
We label this step T5: it is the d'Alembert classification
specialized to the calibrated cost, equivalent to the cosh addition
formula.  Reciprocal symmetry $J(x)=J(x^{-1})$ and continuity on
$\mathbb R_{>0}$ follow from~\eqref{eq:J-unique}.

\subsection[Immediate consequences of J]{Immediate consequences of \texorpdfstring{$J$}{J}}

Once~\eqref{eq:J-unique} is in place, the following are elementary
consequences, not independent postulates.  Properties~(J1)--(J4) are
analytic corollaries of the cost formula; the stable-record reading
below is an interpretive convention, flagged separately.
\begin{align}
  \text{(J1) Zero set:} &&
  J(x)=0 \iff x=1; \label{eq:J1}\\
  \text{(J2) Boundary:} &&
  J(x)\xrightarrow[x\to 0^+]{} \infty; \label{eq:J2}\\
  \text{(J3) Unique minimum:} &&
  x=1 \text{ is the unique global minimum of } J \text{ on } \mathbb R_{>0};
  \label{eq:J3}\\
  \text{(J4) Reciprocity:} &&
  J(x)=J(x^{-1}) \quad \text{for all } x>0. \label{eq:J4}
\end{align}
(J1) is~\eqref{eq:J1}; (J2)--(J4) are~\eqref{eq:J2}--\eqref{eq:J4}.
(J1)--(J3) follow from the AM--GM inequality applied to
$J(x)=\tfrac12(x+x^{-1})-1\ge 0$ with equality iff $x=1$.
(J4) is explicit in~\eqref{eq:J-unique}.
\begin{quote}
\noindent Interpretation (stable records).
An observable is a comparison ratio $x>0$ held near balance
($x\approx 1$); stable readout corresponds to small $J(x)$, and exact
balance is the zero-cost stratum~\eqref{eq:J1}.  This reading is
interpretive rather than a new analytic theorem.
\end{quote}

\subsection{Structural inputs (T6--T8)}

Three further inputs fix the constants of the discrete substrate.  Each
is derived in the RS framework of~\cite{Washburn2026Axioms,PardoGuerra2026}
and is taken here as given; the labels T6--T8 follow the theorem
numbering of the accompanying Lean formalization (Appendix~\ref{app:lean}):
\begin{enumerate}[label=(\arabic*), nosep]
  \item T6 ($\varphi$). The scaling ratio is the golden ratio
        $\varphi=(1+\sqrt5)/2$, the positive root of $r^2=r+1$.
  \item T7 (period $8$). The temporal clock is cyclic with
        period $8$, so the time direction is $\mathbb Z/8\mathbb Z$.
  \item T8 ($D=3$). The spatial lattice has dimension $D=3$, so
        the spatial directions are $\mathbb Z^3$.
\end{enumerate}
These are model-defining hypotheses, not consequences
of~\eqref{eq:J-unique}.  Together with the bond action
$\cosh(\Delta\phi)-1$ fixed by T5, the period $8$ (T7) and the dimension
$D=3$ (T8) determine the lattice $\mathbb Z^3\times\mathbb Z/8\mathbb Z$
on which the rest of the paper works; they enter as fixed boundary data,
as in any lattice field theory.  A Lean formalization threading the
T0--T8 chain is recorded in Appendix~\ref{app:lean}.

On the 8-mode signal space $\mathbb C^8$,
the \emph{recognition operator} $\hat{R}$ is the one-tick cyclic shift
$U$ carried to the DFT-8 basis (Proposition~\ref{prop:dft8}): for mode
amplitudes $\psi_k$,
\[
  (\hat R\psi)_k = \omega^{-k}\psi_k,
  \qquad \omega=e^{-2\pi i/8}.
\]
This is the linear update used in Appendix~\ref{sec:schrodinger}; a
cost-minimization formulation of the same step is given
in~\cite{Washburn2026Axioms}.  Everything in
Sections~\ref{sec:complex}--\ref{sec:rp-transfer} uses only T5, the
period $8$ (T7), and the finite lattice from T8; T6
($\varphi$) enters only in Appendix~\ref{sec:spacetime}, where we record
the value $J(\varphi)$.

\subsection[Constants]{Constants}

The only constant entering the finite-lattice statements below is the
causal speed $c=\text{l}_0/\tau_0=1$ in the units of
Section~\ref{sec:notation}.  The action-to-phase constant
$\hbar_{\rm RS}$ of $e^{iS/\hbar_{\rm RS}}$ enters no theorem of this
paper and may be set to $1$.

\section{Cyclic Shift and Action Identity}
\label{sec:complex}

This section records two elementary facts used below: the spectral
representation of the cyclic shift, and the identity
$J(e^\varepsilon)=\cosh\varepsilon-1$ that identifies the reciprocal cost
with a Euclidean action density.

The first is a period-agnostic fact of linear algebra, stated
below for a general period $n$ and then specialized to the
period $n=8$: a real cyclic shift of period $n>2$ has no complete real
one-dimensional eigendecomposition, and over $\mathbb C$ it diagonalizes
via the discrete Fourier transform.

\subsection{The Shift Operator}

The period-8 clock defines a cyclic shift operator $U$ on real signals
$f : \{0,1,\ldots,7\} \to \mathbb{R}$:
\begin{equation}
  (Uf)(k) = f((k + 1) \bmod 8).
\end{equation}
The signal space is $\mathbb{R}^8$ at this stage; iterating the index
advance eight times gives $U^8=\mathrm{id}$, and the complex spectral
representation is derived below, not assumed.

\subsection{Complex spectrum and DFT diagonalization}

Viewed as a real matrix, the period-$n$ shift $U_n$ is a permutation with
characteristic polynomial $\chi_{U_n}(\lambda)=\lambda^n-1$, whose roots
are the $n$th roots of unity.

\begin{proposition}[Complex spectral representation of the cyclic shift]\label{thm:complex_spectral}
  For every $n>2$ the cyclic shift $U_n$ on $\mathbb R^n$ has no complete
  real one-dimensional eigendecomposition: $\lambda^n-1$ has non-real
  roots, so over $\mathbb R$ the corresponding modes are carried by
  invariant two-dimensional rotation blocks, while over $\mathbb C$ the
  $n$ distinct roots diagonalize $U_n$.  For $n=8$,
  \[
    \lambda^8-1=(\lambda-1)(\lambda+1)(\lambda^2+1)
      (\lambda^2-\sqrt2\,\lambda+1)(\lambda^2+\sqrt2\,\lambda+1),
  \]
  and the irreducible factor $\lambda^2+1=(\lambda-i)(\lambda+i)$ shows
  that $\pm i$ are eigenvalues of $U=U_8$.
\end{proposition}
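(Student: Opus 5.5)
The plan is to argue directly from the characteristic polynomial $\chi_{U_n}(\lambda)=\lambda^n-1$ recorded just before the statement. I would first make that polynomial explicit. $U_n$ is the permutation matrix of the $n$-cycle $k\mapsto k+1 \bmod n$. Expanding $\det(\lambda I-U_n)$ along the first column, or equivalently noting that $U_n$ is the companion matrix of $\lambda^n-1$, gives $\chi_{U_n}(\lambda)=\lambda^n-1$. Its roots are the $n$ distinct $n$th roots of unity $\zeta^j$, $j=0,\dots,n-1$, with $\zeta=e^{2\pi i/n}$.

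\emph{Complex diagonalization.} Since the $n$ eigenvalues are distinct, $U_n$ is diagonalizable over $\mathbb C$. I would exhibit the eigenvectors explicitly, $v_j(k)=\zeta^{jk}$, and check $(U_nv_j)(k)=\zeta^{j(k+1)}=\zeta^j v_j(k)$. These vectors are the columns of the DFT matrix, which is invertible (up to scaling it is unitary), so they form an eigenbasis.

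\emph{Failure over $\mathbb R$.} For $n>2$ the root $\zeta=e^{2\pi i/n}$ has imaginary part $\sin(2\pi/n)\neq 0$. So at least one eigenvalue is non-real, and the only real roots are $\pm1$. A complete real one-dimensional eigendecomposition would require $n$ real eigenvalues counted with multiplicity, which is impossible. Each conjugate pair $\zeta^j,\zeta^{-j}$ with $\zeta^j\notin\mathbb R$ gives the real invariant plane spanned by $\operatorname{Re}v_j$ and $\operatorname{Im}v_j$. On that plane $U_n$ acts as rotation by $2\pi j/n$, which gives the claimed two-dimensional blocks. Here I would have to justify that $\operatorname{Re}v_j$ and $\operatorname{Im}v_j$ are linearly independent over $\mathbb R$; this follows because $v_j$ and $\bar v_j$ are eigenvectors for distinct eigenvalues.

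\emph{The case $n=8$.} I would verify the factorization by multiplication. First, $(\lambda-1)(\lambda+1)(\lambda^2+1)=\lambda^4-1$. Second, $(\lambda^2+1)^2-2\lambda^2=\lambda^4+1$, so $(\lambda^2-\sqrt2\lambda+1)(\lambda^2+\sqrt2\lambda+1)=\lambda^4+1$. The product is $\lambda^8-1$. The quadratic factors have negative discriminants ($-4$ and $-2$), so they are irreducible over $\mathbb R$. Then $\lambda^2+1=(\lambda-i)(\lambda+i)$ gives $\pm i$ as roots of $\chi_U$, hence as eigenvalues of $U$, with eigenvectors $v_2$ and $v_6$.

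\emph{Main obstacle.} There is no genuine technical obstacle here. The only step needing care is the determinant computation for $\chi_{U_n}$, and this can be bypassed: the explicit eigenvectors $v_j$ already produce $n$ distinct eigenvalues $\zeta^j$ of an $n\times n$ matrix. That alone forces $\chi_{U_n}=\prod_j(\lambda-\zeta^j)=\lambda^n-1$.
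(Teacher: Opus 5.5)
Your proposal is correct and follows the paper's route. The roots of $\chi_{U_n}(\lambda)=\lambda^n-1$ are $n$ distinct roots of unity, which diagonalize $U_n$ over $\mathbb C$; the non-real root $e^{2\pi i/n}$ for $n>2$ rules out a real one-dimensional eigendecomposition; conjugate pairs give the rotation blocks; and $\lambda^2+1=(\lambda-i)(\lambda+i)$ gives $\pm i$. You also supply details the paper leaves implicit, namely the explicit DFT eigenvectors $v_j(k)=\zeta^{jk}$ (with $v_2,v_6$ for $\pm i$), the real independence of $\operatorname{Re}v_j$ and $\operatorname{Im}v_j$, and the multiplication check of the $n=8$ factorization.
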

\begin{proof}
  The roots of $\lambda^n-1$ are $e^{2\pi i m/n}$, $m=0,\ldots,n-1$.  For
  $n>2$ at least one root is non-real, so $\chi_{U_n}$ does not split into
  real linear factors and $U_n$ has no real one-dimensional
  eigendecomposition; each conjugate pair $e^{\pm 2\pi i m/n}$ is carried
  over $\mathbb R$ by a two-dimensional rotation block, and the $n$
  distinct roots give a diagonalization over $\mathbb C$.  For $n=8$ the
  factor $\lambda^2+1$ is irreducible over $\mathbb R$ (since $x^2+1>0$)
  and equals $(\lambda-i)(\lambda+i)$, so $\pm i$ are eigenvalues of $U$.
\end{proof}

The value $8$ thus enters only as the period and is irrelevant to
the linear algebra.  The order-$n$ discrete Fourier transform
diagonalizes $U_n$ explicitly and is unitary.

\begin{proposition}[DFT diagonalization]\label{prop:dft8}
  Let $\omega_n=e^{-2\pi i/n}$ and define the order-$n$ DFT by
  \[
    (\widehat f)_k=\frac{1}{\sqrt n}\sum_{j=0}^{n-1} f(j)\,\omega_n^{jk}.
  \]
  Then $\widehat{U_n f}=D\widehat f$ with $D_{kk}=\omega_n^{-k}$, and
  $\langle \widehat f,\widehat g\rangle=\langle f,g\rangle$ for all
  $f,g\in\mathbb C^n$.  For $n=8$ this is the DFT-8 used below, with
  $\omega:=\omega_8=e^{-2\pi i/8}$.
\end{proposition}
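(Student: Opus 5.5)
The plan is a direct computation in two parts: first the intertwining relation $\widehat{U_n f}=D\widehat f$, then unitarity (Parseval).

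\textbf{Intertwining.} Fix $k$ and expand
\[
(\widehat{U_nf})_k=\frac{1}{\sqrt n}\sum_{j=0}^{n-1} f((j+1)\bmod n)\,\omega_n^{jk}.
\]
Reindex with $m=(j+1)\bmod n$. Since $\omega_n^n=1$, the factor $\omega_n^{jk}$ depends only on $j$ modulo $n$. Hence $\omega_n^{jk}=\omega_n^{(m-1)k}=\omega_n^{-k}\omega_n^{mk}$ for every $m$, including the wrap-around term $m=0$, $j=n-1$. Because $j\mapsto m$ is a bijection of $\{0,\dots,n-1\}$, the sum becomes
\[
\omega_n^{-k}\,\frac{1}{\sqrt n}\sum_{m}f(m)\,\omega_n^{mk}=\omega_n^{-k}(\widehat f)_k .
\]
This is exactly $D\widehat f$ with $D_{kk}=\omega_n^{-k}$.

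\textbf{Unitarity.} Write $F_{kj}=\omega_n^{jk}/\sqrt n$. I would show $F^*F=I$ via the geometric-sum orthogonality relation
\[
\sum_{k=0}^{n-1}\omega_n^{k(j-j')}=n\,\delta_{jj'} .
\]
For $j\neq j'$ modulo $n$, the ratio $r=\omega_n^{j-j'}\neq 1$ and satisfies $r^n=1$. The sum therefore equals $(1-r^n)/(1-r)=0$; for $j=j'$ every term is $1$ and the sum is $n$.

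Then
\[
\langle\widehat f,\widehat g\rangle=\sum_k\overline{(\widehat f)_k}(\widehat g)_k=\frac1n\sum_{j,j'}\overline{f(j)}g(j')\sum_k\omega_n^{k(j'-j)}=\sum_j\overline{f(j)}g(j)=\langle f,g\rangle.
\]
This uses $\overline{\omega_n}=\omega_n^{-1}$ and the standard inner product, antilinear in the first slot. Taking the convention linear in the first slot instead changes nothing.

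It follows that $F$ is unitary, in particular invertible. Consequently $U_n=F^{-1}DF$ is diagonalized by the DFT, with eigenvalues $\omega_n^{-k}=e^{2\pi ik/n}$. This matches the roots identified in Proposition~\ref{thm:complex_spectral}. Specializing to $n=8$ gives the stated DFT-8.

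\textbf{Main obstacle.} There is no real obstacle here. The only point needing care is the wrap-around index in the reindexing step, which is handled by the periodicity $\omega_n^n=1$. The orthogonality relation is the one genuinely substantive step.
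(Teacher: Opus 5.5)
Your proposal is correct and follows the paper's proof essentially verbatim. The reindexing $m=j+1$, with $\omega_n^n=1$ handling the wrap-around, gives the intertwining relation, and the geometric-sum orthogonality gives unitarity and Parseval; the only cosmetic difference is that you check $F^\ast F=I$ by summing over $k$, while the paper checks $FF^\ast=I$ by summing over $j$, which is the same computation for the symmetric square matrix $F$.
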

\begin{proof}
  A change of variables $m=j+1$ in the cyclic sum gives
  \[
    (\widehat{U_n f})_k
    =\frac{1}{\sqrt n}\sum_{j=0}^{n-1} f(j+1)\,\omega_n^{jk}
    =\omega_n^{-k}\,\frac{1}{\sqrt n}\sum_{m=0}^{n-1} f(m)\,\omega_n^{mk}
    =\omega_n^{-k}(\widehat f)_k .
  \]
  With $F_{kj}=n^{-1/2}\omega_n^{jk}$ and conjugate transpose $F^\ast$,
  \[
    (F F^\ast)_{k\ell}
    =\frac{1}{n}\sum_{j=0}^{n-1}\omega_n^{j(k-\ell)}
    =\delta_{k\ell},
  \]
  since the geometric sum vanishes unless $k\equiv\ell\pmod n$.  Thus $F$
  is unitary, and Parseval holds for the standard Hermitian inner product
  $\langle f,g\rangle=\sum_{j=0}^{n-1}\overline{f(j)}\,g(j)$.
\end{proof}

\subsection[J-Cost as Euclidean Action]{\texorpdfstring{$J$}{J}-Cost as Euclidean Action}
\label{sec:wick}

$J$-cost equals the Euclidean action density, and a formal Wick
correspondence follows.  The rigorous Osterwalder--Schrader bridge to a
Lorentzian theory requires reflection positivity and is not established
unconditionally here.

\begin{lemma}[Euclidean action identity]\label{thm:euclidean}
  For all $\varepsilon \in \mathbb{R}$,
  \[
    J(e^\varepsilon) = \cosh(\varepsilon) - 1.
  \]
\end{lemma}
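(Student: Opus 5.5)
The plan is to substitute directly into the closed form \eqref{eq:J-unique} obtained in step T5, $J(x)=\tfrac12(x+x^{-1})-1$, and recognize the hyperbolic cosine. Since $e^\varepsilon>0$ for every real $\varepsilon$, the point $x=e^\varepsilon$ lies in the domain $\mathbb R_{>0}$ of $J$, so the substitution is legitimate for all $\varepsilon\in\mathbb R$. We have $x^{-1}=e^{-\varepsilon}$, which gives
\[
  J(e^\varepsilon)=\tfrac12\bigl(e^\varepsilon+e^{-\varepsilon}\bigr)-1=\cosh\varepsilon-1
\]
by the definition $\cosh\varepsilon=\tfrac12(e^\varepsilon+e^{-\varepsilon})$.

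An equivalent route, which I would mention as a consistency check, goes through the d'Alembert substitution of Section~\ref{sec:forcing}. There $H(t)=1+J(e^t)$, and under \eqref{eq:A1}--\eqref{eq:A3} the classification forces $H(t)=\cosh(\lambda t)$ with $\lambda=1$. So $J(e^\varepsilon)=H(\varepsilon)-1=\cosh\varepsilon-1$ directly; this is just the log-coordinate form of T5.

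No step is a genuine obstacle. The only point needing care is that the identity depends on the calibrated solution \eqref{eq:J-unique}, so the lemma inherits the regularity-and-calibration hypothesis attached to \eqref{eq:A3}. Given T5, the computation is a one-line algebraic identity.
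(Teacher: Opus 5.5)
Your proof is correct and matches the paper's: substitute $x=e^\varepsilon$ into $J(x)=\tfrac12(x+x^{-1})-1$ and recognize $\tfrac12(e^\varepsilon+e^{-\varepsilon})=\cosh\varepsilon$. The consistency check through $H(t)=1+J(e^t)$ is a harmless aside, since it is the same T5 classification written in log coordinates.
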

\begin{proof}
  By definition, $J(x) = \frac{1}{2}(x + x^{-1}) - 1$.  Setting $x = e^\varepsilon$:
  $J(e^\varepsilon) = \frac{1}{2}(e^\varepsilon + e^{-\varepsilon}) - 1 = \cosh(\varepsilon) - 1$.
\end{proof}

The leading-order term $\varepsilon^2/2$ is the standard free-scalar
kinetic-action term on the lattice; the full $\cosh\varepsilon-1$ is a
specific interacting action with all even-power vertices at fixed
coefficients $1/(2n)!$.  In the small-perturbation regime:
\begin{lemma}[Quadratic approximation]\label{thm:quadratic}
  For $|\varepsilon| \leq 1/2$,
  \[
    \left|J(e^\varepsilon) - \frac{\varepsilon^2}{2}\right| \leq \frac{|\varepsilon|^4}{18}.
  \]
  (This bound is used in the continuum-scaling discussion of
  Section~\ref{sec:continuum-limit}.)
\end{lemma}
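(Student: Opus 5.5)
By Lemma~\ref{thm:euclidean} it suffices to bound $|\cosh\varepsilon-1-\varepsilon^2/2|$ for $|\varepsilon|\le 1/2$. The plan is to expand the even power series and bound the tail directly:
\[
  \cosh\varepsilon-1-\frac{\varepsilon^2}{2}
  =\sum_{n\ge 2}\frac{\varepsilon^{2n}}{(2n)!}
  =\varepsilon^4\sum_{m\ge 0}\frac{\varepsilon^{2m}}{(2m+4)!} .
\]
All terms are nonnegative, so the absolute value is simply this sum.

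Next, bound the inner sum uniformly on $|\varepsilon|\le 1/2$. Each term is increasing in $\varepsilon^2$, so the worst case is $\varepsilon^2=1/4$. There the inner sum is
\[
  \frac{1}{24}+\frac{1}{4\cdot 720}+\frac{1}{16\cdot 40320}+\cdots .
\]
This is at most $\tfrac{1}{24}\bigl(1+\tfrac{1}{4\cdot 30}\sum_{m\ge0}(1/4)^m\bigr)$, because the ratio of successive terms is $\varepsilon^2/((2m+5)(2m+6))\le 1/120$. That gives roughly $0.04167\cdot 1.0084\approx 0.0420$, well below $1/18\approx 0.0556$.

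An equally short alternative is Taylor's theorem with Lagrange remainder at order four. The third derivative $\sinh$ vanishes at $0$, so the remainder is $\cosh(\xi)\varepsilon^4/24$ with $|\xi|\le 1/2$. Then $\cosh(1/2)\le 1.128<24/18=4/3$ finishes the argument, and I would present this version. The only step needing care, and the one I expect to be the main obstacle, is the numerical inequality $\cosh(1/2)<4/3$. I would certify it by bounding the series: $\cosh(1/2)\le 1+\tfrac18+\tfrac{1}{384}\cdot\tfrac{1}{1-1/4}<4/3$.
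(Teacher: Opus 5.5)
Your proposal is correct, and your first sketch is essentially the paper's proof. The paper expands the same even series but uses the cruder termwise bound $24/(2k+4)!\le 1$, so the tail is at most $\frac{\varepsilon^4}{24}\cdot\frac{1}{1-\varepsilon^2}\le\frac{\varepsilon^4}{24}\cdot\frac43=\frac{\varepsilon^4}{18}$. That is exactly where the constant $1/18$ comes from. Your sharper ratio bound $\varepsilon^2/((2m+5)(2m+6))\le 1/120$ improves the constant but is not needed. (A harmless slip: $\frac1{24}\bigl(1+\frac1{120}\cdot\frac43\bigr)=\frac1{24}\cdot\frac{91}{90}$, whereas your $1.0084\approx\frac{120}{119}$ comes from summing the ratio-$\frac1{120}$ geometric series. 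Both are far below $\frac{1}{18}\cdot 24=\frac43$, so nothing breaks.)

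The Lagrange-remainder version you would actually present is a genuinely different route. It gives the exact formula $\cosh\varepsilon-1-\varepsilon^2/2=\cosh(\xi)\,\varepsilon^4/24$ with $\xi$ between $0$ and $\varepsilon$. This yields the sharper constant $\cosh(1/2)/24\approx0.047$. It also extends verbatim to any radius $r$ with constant $\cosh(r)/24$, whereas the paper's geometric bound $\frac{1}{24(1-r^2)}$ needs $r<1$. The price is an appeal to Taylor's theorem plus one numerical check. Moreover, your certification of $\cosh(1/2)<4/3$ is itself the same geometric-tail device as the paper's argument, so this route relocates the series estimate rather than avoiding it. The paper's version stays within elementary series manipulation and produces $\frac1{24}\cdot\frac43=\frac1{18}$ with no decimal evaluation at all. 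In both routes the content is the same: a factor at most $4/3$ multiplying the leading term $\varepsilon^4/24$.
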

\begin{proof}
  Write
  \[
    \cosh(\varepsilon)-1-\frac{\varepsilon^2}{2}
    =\sum_{n=2}^{\infty}\frac{\varepsilon^{2n}}{(2n)!}.
  \]
  For $|\varepsilon|\le 1/2$, the tail satisfies $24/(2k+4)!\le 1$ for all
  $k\ge 0$ (with equality at $k=0$), so
  \[
    \sum_{n=2}^{\infty}\frac{|\varepsilon|^{2n}}{(2n)!}
    =\frac{|\varepsilon|^4}{24}
      \sum_{k=0}^{\infty}|\varepsilon|^{2k}\frac{24}{(2k+4)!}
    \le \frac{|\varepsilon|^4}{24}\sum_{k=0}^{\infty}|\varepsilon|^{2k}
    =\frac{|\varepsilon|^4}{24}\cdot\frac{1}{1-|\varepsilon|^2}
    \le \frac{|\varepsilon|^4}{24}\cdot\frac{4}{3}
    =\frac{|\varepsilon|^4}{18},
  \]
  where the last bound uses $|\varepsilon|^2\le 1/4$.
\end{proof}

Lemma~\ref{thm:quadratic} controls the deviation from the Gaussian
kinetic term.

The leading term $\varepsilon^2/2$ gives the free-scalar lattice
kinetic term under the discrete gradient $\Delta_\mu\phi$.  The
$\varepsilon^4/24$ term is a quartic gradient interaction, namely
$(\Delta_\mu\phi)^4/24$ on a bond, not a local scalar $\phi^4$
potential.  Higher even gradient vertices are likewise fixed by the cosh
expansion, with no free parameters.

\begin{remark}[Formal Lorentzian interpretation]
The identity $\cosh(\tau)=\cos(i\tau)$ formally connects the Euclidean and
Lorentzian signatures: continuing $\tau\to it$ replaces the decaying
Euclidean bond weight $e^{-(\cosh(\Delta_\mu\phi)-1)}$ by a Lorentzian
phase factor $e^{iS/\hbar_{\rm RS}}$ of unit modulus.  This is only a
formal change of variables, not a reconstruction of a quantum theory: a
genuine passage to a Lorentzian theory would proceed through
Osterwalder--Schrader reconstruction and hence requires reflection
positivity, which the continuous noncompact kernel does not supply
(Proposition~\ref{prop:CB-fails}).  The discrete-field
Theorem~\ref{thm:rp-discrete} provides the OS hypothesis under
$\CBcond{v_0}{N}$, but the reconstruction is not carried out here.
\end{remark}

\section{Reflection Positivity and the Transfer Matrix}
\label{sec:rp-transfer}

To connect to constructive QFT one needs reflection positivity for the
Euclidean theory.  Section~\ref{subsec:lattice-action} fixes the
reciprocal-cost action and the single-site pinned finite-volume measure;
the remaining subsections develop reflection positivity along a single
logical spine, which we state here so that the formal results below read
as one chain rather than a list.
\begin{enumerate}[label=(\arabic*),nosep]
  \item By the standard crossing-bond factorization, reflection
        positivity for this nearest-neighbor action reduces to positive
        definiteness of a single one-bond temporal kernel
        $K(u)=\exp[-(\cosh u-1)]$ (Section~\ref{subsec:halfspace}).
  \item For the continuous noncompact field this reduction fails:
        $K$ is not positive definite, since its Fourier transform dips
        below zero near $\xi\approx3$ (Section~\ref{subsec:CB}).
  \item Restricting the field to a finite symmetric alphabet
        $\Phi(v_0,N)$ replaces the Bochner test on $\mathbb R$ by
        positive semidefiniteness of a finite crossing-bond Toeplitz
        matrix, which a diagonal-dominance certificate discharges
        uniformly in the alphabet size $N$ at explicit spacings
        (Section~\ref{subsec:discrete-rescue}).
  \item The verified condition then yields reflection positivity of the
        discrete-field measure (Theorem~\ref{thm:rp-discrete}) and, in
        turn, a positive, self-adjoint one-step transfer operator in an
        explicit reflection-positivity inner product
        (Sections~\ref{subsec:transfer-disc}--\ref{subsec:transfer}).
\end{enumerate}
Steps~(2) and~(3) are the paired negative and positive contribution of
the paper.  The arguments throughout are stated entirely in terms of the
finite periodic lattice $\mathbb Z^3\times\mathbb Z/8\mathbb Z$, the bond
potential $\cosh(\Delta\phi)-1$, and, in the discrete-field variant, a
finite admissible set $\Phi$; the RS origin of these inputs is noted but
not required to follow the mathematical content.

\subsection{The reciprocal-cost Euclidean lattice action and pinned measure}
\label{subsec:lattice-action}

Let the Euclidean lattice be
$\Lambda_{\rm RS}:=\mathbb Z^3\times \mathbb Z/8\mathbb Z$.  A real
scalar field is a function $\phi:\Lambda_{\rm RS}\to\mathbb R$.  For a
lattice direction $\hat\mu$, define the finite difference
$\Delta_\mu\phi(x):=\phi(x+\hat\mu)-\phi(x)$.
The nearest-neighbor action density is fixed by the reciprocal cost:
\[
J(e^\varepsilon)=\cosh\varepsilon-1.
\]
Formally, on the infinite lattice this gives
\begin{equation}
\label{eq:cosh-action}
S_{\rm RS}[\phi]
:=\sum_{x\in\Lambda_{\rm RS}}\sum_\mu
\left(\cosh(\Delta_\mu\phi(x))-1\right),
\end{equation}
where the inner sum runs over the four positive coordinate directions
$\hat\mu\in\{\hat e_1,\hat e_2,\hat e_3,\hat e_0\}$ (three spatial and one
temporal), so that the pair $(x,\hat\mu)$ indexes each
nearest-neighbor bond exactly once and no bond is double-counted.

For even $L$, set
\[
B_L:=\{-L/2,\ldots,L/2-1\}^3\subset\mathbb Z^3,
\qquad
\Lambda_L:=B_L\times\mathbb Z/8\mathbb Z.
\]
The temporal direction is periodic with period eight.  Spatial bonds may
be taken with periodic boundary conditions.  Alternatively, free boundary
conditions may be imposed by restricting the sum to nearest-neighbor
bonds entirely contained in $B_L$.  The reflection-positivity and
transfer-kernel statements below do not depend on this choice.  The
action depends only on field gradients, so it is
invariant under the single global additive shift $\phi\mapsto\phi+c$
for $c\in\mathbb R$.  This is the only gauge freedom of the action: a
temporally varying shift $\phi(x,t)\mapsto\phi(x,t)+c(t)$ alters the
temporal bonds $\cosh(\phi(x,t+1)+c(t+1)-\phi(x,t)-c(t))-1$ unless
$c(t)$ is constant, so per-slice shifts are not symmetries.

For the continuous measure we remove the one global additive zero
mode by a single global pinning: fix a reference site
$z_\ast=(x_\ast,t_\ast)\in\Lambda_L$ and impose
\begin{equation}
\label{eq:single-pin}
\phi(z_\ast)=0.
\end{equation}
This pin is a normalization device for the noncompact continuous measure
only.  The discrete-field measure of
Section~\ref{subsec:discrete-rescue}, by contrast, is supported on a
finite configuration space and is normalizable without any pin;
the reflection-positivity theorem there uses the unpinned finite measure,
so no $\theta$-compatibility constraint on $z_\ast$ arises.  The
finite-volume action is
\begin{equation}
\label{eq:finite-action}
S_L[\phi]
:=\sum_{z\in\Lambda_L}\sum_\mu
\left(\cosh(\Delta_\mu\phi(z))-1\right),
\end{equation}
with $\sum_\mu$ again running over the positive coordinate directions as
in~\eqref{eq:cosh-action} (one representative per nearest-neighbor bond),
and the pinned finite-volume measure is
\begin{equation}
\label{eq:pinned-measure}
d\mu_L(\phi)
:=Z_L^{-1}\,e^{-S_L[\phi]}
\prod_{z\in\Lambda_L\setminus\{z_\ast\}}d\phi(z),
\qquad \phi(z_\ast)=0,
\end{equation}
with $S_L$ as in~\eqref{eq:finite-action}.  The next lemma records that the
single global pin makes this noncompact continuous measure normalizable.

\begin{lemma}[Finiteness of the pinned partition function]
\label{lem:pinned-finite}
For every even $L$ the single-site pinned partition function
\[
  Z_L=\int_{\mathbb R^{\,\Lambda_L\setminus\{z_\ast\}}}
       e^{-S_L[\phi]}\!\!\!\prod_{z\in\Lambda_L\setminus\{z_\ast\}}\!\!\!d\phi(z),
  \qquad \phi(z_\ast)=0,
\]
is finite and strictly positive.  Hence $d\mu_L$
in~\eqref{eq:pinned-measure} is a well-defined probability measure.
\end{lemma}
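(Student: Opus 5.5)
Positivity is immediate: the integrand $e^{-S_L[\phi]}$ is continuous and strictly positive on $\mathbb R^{\Lambda_L\setminus\{z_\ast\}}$, so the integral is strictly positive (possibly $+\infty$). The work is finiteness, and the plan is to dominate the full action from below by a sum over the bonds of a spanning tree rooted at $z_\ast$, then integrate leaf by leaf.

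\textbf{Step 1 (drop bonds).} Every bond term satisfies $\cosh(\Delta_\mu\phi)-1\ge 0$. Choose a spanning tree $T$ of the nearest-neighbor graph on $\Lambda_L$; this graph is connected under either spatial boundary choice, since free-boundary bonds inside $B_L$ together with the periodic temporal bonds already connect it. Discarding all bonds not in $T$ gives
\[
  e^{-S_L[\phi]}\le \prod_{\{z,w\}\in T} e^{-(\cosh(\phi(z)-\phi(w))-1)} .
\]

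\textbf{Step 2 (change of variables).} Root $T$ at $z_\ast$. Each non-root vertex $z$ has a unique parent $p(z)$, and there are exactly $|\Lambda_L|-1$ such vertices. Map $\phi\mapsto(\eta_z)_{z\ne z_\ast}$ with $\eta_z=\phi(z)-\phi(p(z))$. With $\phi(z_\ast)=0$ this is a unimodular linear bijection: in an order compatible with the tree it is triangular with unit diagonal, so its Jacobian is $1$.

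\textbf{Step 3 (bound the integral).} The bound becomes a product of one-dimensional integrals:
\[
  Z_L\le \Bigl(\int_{\mathbb R}e^{-(\cosh\eta-1)}\,d\eta\Bigr)^{|\Lambda_L|-1}.
\]
The one-dimensional integral is finite because $\cosh\eta-1\ge\eta^2/2$, which gives the bound $\sqrt{2\pi}$. Alternatively, it equals $2e\,K_0(1)$. Together with the positivity above, this shows $d\mu_L$ is a probability measure.

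\textbf{Main obstacle.} The only delicate point is ensuring the pinned variables are genuinely controlled. The tree must reach every site from $z_\ast$, so the bookkeeping of connectivity under both boundary conventions is the step to check carefully. Everything else is elementary. A cruder route, bounding by all $|\Lambda_L|\cdot 4$ bonds, does not factorize, which is why the tree reduction is used.
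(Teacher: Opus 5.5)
Your proposal is correct and follows the paper's proof essentially step for step. Both use a spanning tree rooted at $z_\ast$ and discard the nonnegative non-tree bond terms. Both then make the unimodular change of variables to tree-edge differences and bound $Z_L$ by $\bigl(\int_{\mathbb R}e^{-(\cosh\eta-1)}\,d\eta\bigr)^{|\Lambda_L|-1}$, with positivity coming from the strictly positive continuous integrand. Your comparison $\cosh\eta-1\ge\eta^2/2$, which gives the explicit per-edge bound $\sqrt{2\pi}$, is a slightly more elementary way to see that the one-bond integral is finite than the paper's identification with $2e\,K_0(1)$ and its super-exponential decay argument.
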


\begin{proof}
View $\Lambda_L$ as the finite nearest-neighbor graph carrying the bonds
of $S_L$; with the periodic temporal cycle and either spatial boundary
condition on $B_L$ it is connected.  Fix a spanning tree $T$ rooted at
$z_\ast$; it has $|\Lambda_L|-1$ edges.  Change variables from the
unpinned site values $\{\phi(z):z\neq z_\ast\}$ to the tree-edge
differences $\eta_e:=\Delta\phi$ across $e\in T$.  Because $\phi(z_\ast)=0$
and each $\phi(z)$ equals the sum of the $\eta_e$ along the unique tree
path from $z_\ast$ to $z$, this linear map is a bijection given by an
integer unimodular matrix, so its Jacobian is $\pm1$.  Every bond term
$\cosh(\Delta\phi)-1$ is nonnegative and the tree edges form a subset of
all bonds, hence $S_L[\phi]\ge\sum_{e\in T}(\cosh\eta_e-1)$.  Therefore
\[
  0<Z_L\le\prod_{e\in T}\int_{\mathbb R}e^{-(\cosh\eta-1)}\,d\eta
        =\bigl(2e\,K_0(1)\bigr)^{\,|\Lambda_L|-1}<\infty ,
\]
where $\int_{\mathbb R}e^{-\cosh\eta}\,d\eta=2K_0(1)$ in terms of the
modified Bessel function $K_0$~\cite{DLMF}; the single-bond integral
converges because $\cosh\eta-1=\tfrac12 e^{|\eta|}(1+o(1))$ as
$|\eta|\to\infty$, so $e^{-(\cosh\eta-1)}$ decays faster than any
exponential.  Strict positivity of $Z_L$ follows from positivity and
continuity of the integrand.
\end{proof}

\subsection{Temporal reflection, half-lattice, and crossing bonds}
\label{subsec:halfspace}

The Osterwalder--Schrader half-space data are specified as follows.  We
use a \emph{link} (bond) reflection, so that the two halves are coupled
by genuine crossing bonds; this is the setting in which reflection
positivity reduces to positive definiteness of the one-bond kernel
$K$.  (A \emph{site} reflection $t\mapsto-t\bmod 8$, fixing the
slices $t=0,4$, would instead leave the two open arcs $\{1,2,3\}$ and
$\{5,6,7\}$ coupled only through the shared fixed-slice fields; for a
nearest-neighbor action that case is reflection positive
unconditionally (condition on the two fixed slices and use
$\theta$-symmetry), so it does not see the kernel $K$ at all and is not
the relevant reduction here.)  On the cyclic time
$\mathbb Z/8\mathbb Z=\{0,1,\ldots,7\}$ define the temporal reflection
\[
\theta(\mathbf{x},t)=(\mathbf{x},\,(7-t)\bmod 8),
\]
an involution with reflection planes lying between the slice
pairs $\{3,4\}$ and $\{7,0\}$ (it has no fixed slice).  It acts by
$0\leftrightarrow7$, $1\leftrightarrow6$, $2\leftrightarrow5$,
$3\leftrightarrow4$, and is a symmetry of the action $S_L$.  Take the
\emph{positive-time half-lattice}
\begin{equation}
\label{eq:halfspace}
\Lambda_L^+:=B_L\times\{0,1,2,3\},
\end{equation}
and the negative-time half-lattice $\Lambda_L^-:=B_L\times\{4,5,6,7\}$,
so $\theta(\Lambda_L^+)=\Lambda_L^-$.  The temporal bonds split as
follows:
\begin{itemize}[nosep]
\item Within $\Lambda_L^+$: bonds $t=0\!\to\!1$, $1\!\to\!2$, $2\!\to\!3$;
\item Within $\Lambda_L^-$: bonds $t=4\!\to\!5$, $5\!\to\!6$, $6\!\to\!7$;
\item Crossing bonds (one across each plane): $t=3\!\to\!4$ and
      $t=7\!\to\!0$.
\end{itemize}
Each crossing bond straddles a reflection plane and is mapped to itself
by $\theta$; it contributes a single factor
$K\bigl(\phi^+(\mathbf x)-\phi^-(\mathbf x)\bigr)$ to the bond-product
decomposition of $e^{-S_L}$, where $\phi^+,\phi^-$ are the fields on the
two slices adjacent to that plane (namely $(\phi_3,\phi_4)$ across one
plane and $(\phi_0,\phi_7)$ across the other).  Spatial bonds lie within
a single slice and so belong entirely to one half.  This is the
half-lattice / crossing-bond data (Figure~\ref{fig:reflection}) used in
all reflection-positivity statements below; the positive-time
half-lattice is~\eqref{eq:halfspace}.

\begin{figure}[htbp]
\centering
\begin{tikzpicture}[scale=1.05,>=Stealth,line cap=round,
  posnode/.style={circle,draw=black,thick,fill=black!7,minimum size=7mm,inner sep=0pt,font=\small},
  negnode/.style={rectangle,draw=black,thick,fill=black!22,minimum size=7mm,inner sep=0pt,font=\small}]
  \def\R{2.3}
  \coordinate (p0) at (67.5:\R);
  \coordinate (p1) at (22.5:\R);
  \coordinate (p2) at (-22.5:\R);
  \coordinate (p3) at (-67.5:\R);
  \coordinate (p4) at (-112.5:\R);
  \coordinate (p5) at (-157.5:\R);
  \coordinate (p6) at (157.5:\R);
  \coordinate (p7) at (112.5:\R);
  \draw[dashed,gray!65] (0,-3.15) -- (0,3.15);
  \node[gray!70,font=\footnotesize] at (0,3.42) {reflection plane $(7\mid 0)$};
  \node[gray!70,font=\footnotesize] at (0,-3.46) {reflection plane $(3\mid 4)$};
  \draw[<->,gray,dashed,thin] (p1) to[bend left=10]
    node[pos=0.5,above,font=\footnotesize,text=black]{$\theta$} (p6);
  \draw[<->,gray,dashed,thin] (p2) to[bend right=10] (p5);
  \draw[thick] (p0)--(p1) (p1)--(p2) (p2)--(p3);
  \draw[thick] (p4)--(p5) (p5)--(p6) (p6)--(p7);
  \draw[line width=1pt,double,double distance=2pt,black] (p7)--(p0);
  \draw[line width=1pt,double,double distance=2pt,black] (p3)--(p4);
  \node[posnode] at (p0){0}; \node[posnode] at (p1){1};
  \node[posnode] at (p2){2}; \node[posnode] at (p3){3};
  \node[negnode] at (p4){4}; \node[negnode] at (p5){5};
  \node[negnode] at (p6){6}; \node[negnode] at (p7){7};
  \node[black,font=\small] at (3.2,0) {$\Lambda_L^+$};
  \node[black,font=\small] at (-3.2,0) {$\Lambda_L^-$};
\end{tikzpicture}
\caption{Temporal reflection on the 8-tick cycle
$\mathbb Z/8\mathbb Z$.  The involution
$\theta:(\mathbf x,t)\mapsto(\mathbf x,(7-t)\bmod 8)$ swaps
$0\!\leftrightarrow\!7$, $1\!\leftrightarrow\!6$,
$2\!\leftrightarrow\!5$, $3\!\leftrightarrow\!4$ (dashed identifications);
its two reflection planes lie in the gaps $7\mid 0$ and $3\mid 4$ (dashed
vertical axis).  The positive-time half-lattice
$\Lambda_L^+=B_L\times\{0,1,2,3\}$ (right, circular sites) and the
negative-time half $\Lambda_L^-=B_L\times\{4,5,6,7\}$ (left, square sites)
are joined only by the two
crossing bonds $7\!\to\!0$ and $3\!\to\!4$ (drawn as thick double bonds),
each straddling a reflection plane and contributing one factor of the
one-bond kernel $K$.}
\label{fig:reflection}
\end{figure}

Figure~\ref{fig:reflection} makes this bond bookkeeping visual.  Of the
eight temporal bonds, only the two thick double bonds cross a reflection
plane, while the remaining six temporal bonds (and all spatial bonds,
suppressed in the planar drawing) lie within a single half.  Reflecting
across the dashed vertical axis exchanges the right half $\Lambda_L^+$
with the left half $\Lambda_L^-$ and maps each crossing bond to itself,
so conditioning on the
four fields adjacent to the two planes factorizes $e^{-S_L}$ into a
half-supported amplitude, its $\theta$-image, and the two crossing-bond
weights.  Reflection positivity therefore reduces to a positivity property
of the single one-bond kernel $K$: the continuous cosh--Bochner test
of Section~\ref{subsec:CB} (Figure~\ref{fig:bochner}), and, once the field
alphabet is made finite, the matrix criterion $\CBcond{v_0}{N}$ certified
in Section~\ref{subsec:discrete-rescue} (Figure~\ref{fig:gershgorin}).  The
factorization just described is what the slice construction in the
proof of Theorem~\ref{thm:rp-discrete} carries out rigorously, assembling
the crossing-bond weights into the Gram matrix whose positivity those two
tests supply.

The reflection $\theta$ acts on the finite cyclic group
$\mathbb Z/8\mathbb Z$ and requires no special treatment of the periodic
boundary.  As noted in Section~\ref{subsec:lattice-action}, the
discrete-field measure used in Theorem~\ref{thm:rp-discrete} is finite
and is taken without a pin, so $\theta$ acts on the full
configuration space.

\subsection{The natural Bochner reduction and its failure}
\label{subsec:CB}

For a reflection-invariant finite-volume normalization, the standard
nearest-neighbor route to Osterwalder--Schrader reflection
positivity~\cite{OS1973,OS1975,FIS1978} is the inequality
\begin{equation}
\label{eq:rp-def}
\langle \theta F,F\rangle:=\int \overline{F(\theta\phi)}\,F(\phi)\,d\mu\ge 0
\end{equation}
for observables $F$ supported on the positive-time half-lattice
$\Lambda_L^+=B_L\times\{0,1,2,3\}$.  In the present nearest-neighbor
setup, the product/Gram argument reduces this condition to positive
definiteness of the single crossing-bond kernel.  Here $F$ may be complex-valued,
$\theta$ acts by the linear pullback $(\theta F)(\phi):=F(\theta\phi)$,
and $\langle G,H\rangle:=\int\overline{G}\,H\,d\mu$ is the Hermitian $L^2$
pairing, conjugate-linear in its first argument; the same convention is
used for the discrete-field measure $\mu_L^\Phi$ below
(Theorem~\ref{thm:rp-discrete}) and for the slice inner
product~\eqref{eq:RP-inner}.  We therefore test the one-bond crossing
kernel directly, before any reflection-compatible normalization of the
noncompact global zero mode is fixed: the single-site
pin~\eqref{eq:single-pin} is only a normalization device and is not
itself reflection-invariant for the link reflection.  For the $\cosh-1$
action this kernel is
\begin{equation}
\label{eq:K-onebond}
K(u):=\exp\!\bigl[-(\cosh u-1)\bigr],\qquad u\in\mathbb R,
\end{equation}
and, as we now show, its continuous Bochner positivity already fails.

\begin{definition}[Cosh--Bochner positivity hypothesis on $\mathbb R$]
\label{def:CB}
The \emph{cosh--Bochner positivity hypothesis} (CB) is the statement
that $K$ is positive definite on $\mathbb R$.  Since
$K\in L^1(\mathbb R)\cap C(\mathbb R)$, Bochner's theorem~\cite{Reed1975}
applies in both directions, so CB is equivalent to non-negativity of the
Fourier transform
\[
\widetilde K(\xi)
:=\int_{\mathbb R} K(u)\,e^{-i\xi u}\,du,
\qquad \widetilde K(\xi)\ge 0 \ \text{ for all } \xi\in\mathbb R.
\]
\end{definition}

The hypothesis CB is exactly what the standard nearest-neighbor route
requires; the next proposition evaluates $\widetilde K$ in closed form
and certifies that it turns negative near $\xi=3$.

\begin{proposition}[The continuous cosh kernel is not positive definite]
\label{prop:CB-fails}
The Fourier transform of $K(u)=\exp[-(\cosh u-1)]$ admits the closed form
$\widetilde K(\xi)=2\mathrm{e}\,K_{i\xi}(1)$, where $\mathrm{e}$ is
Euler's number and $K_{i\xi}$ is the modified Bessel function of imaginary
order~\cite{Watson1944,DLMF}.  Its value at
$\xi=3$ is negative,
\[
\widetilde K(3)\in[-0.004817599594096,\,-0.004817599594055]<0
\quad\text{(Figure~\ref{fig:bochner})},
\]
and $\widetilde K(\xi)<0$ on a nonempty open set of frequencies near
$\xi=3$.  Hence CB fails for the continuous noncompact kernel.
\end{proposition}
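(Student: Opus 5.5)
The proof has three parts: derive the closed form, certify the sign at $\xi=3$, and pass to an open neighbourhood.

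\emph{Closed form.} Start from the standard integral representation
\[
K_\nu(z)=\int_0^\infty e^{-z\cosh t}\cosh(\nu t)\,dt, \qquad \operatorname{Re} z>0,
\]
from~\cite{Watson1944,DLMF}, which is valid for every complex order $\nu$ because the factor $e^{-z\cosh t}$ decays doubly exponentially. Since $K$ is even, the imaginary part of $\widetilde K(\xi)$ vanishes and
\[
\widetilde K(\xi)=2e\int_0^\infty e^{-\cosh u}\cos(\xi u)\,du .
\]
Setting $\nu=i\xi$ gives $\cosh(i\xi u)=\cos(\xi u)$, so $\widetilde K(\xi)=2e\,K_{i\xi}(1)$. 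In particular $\widetilde K$ is real-valued. It is also continuous in $\xi$, and in fact real-analytic: differentiating under the integral sign is justified by the dominating function $u^n e^{-\cosh u}$.

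\emph{Sign at $\xi=3$.} The core is a rigorous enclosure of
\[
I:=\int_0^\infty e^{-\cosh u}\cos(3u)\,du .
\]
\begin{itemize}[nosep]
\item Truncate at some $U$, say $U=6$ or $7$. The tail is bounded by $\int_U^\infty e^{-\cosh u}\,du\le e^{-\cosh U}/\sinh U$, using $e^{-\cosh u}\le e^{-\cosh u}\sinh u/\sinh U$ for $u\ge U$. This is astronomically small, around $10^{-90}$.
\item On $[0,U]$, apply a composite Gauss--Legendre (or Taylor-model) quadrature in interval arithmetic.
\item Bound the remainder rigorously via derivative bounds of the analytic integrand. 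Bernstein-ellipse bounds are a convenient choice: the integrand is entire, so its size on a complex ellipse around each subinterval bounds the Gauss error geometrically.
\end{itemize}
Multiplying the resulting enclosure by an interval for $2e$ yields the stated interval. This is the content of Appendix~\ref{app:fourier-cert}; in the proof I would summarise the scheme and cite that certificate.

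\emph{Open set.} Because $\widetilde K$ is continuous and $\widetilde K(3)<0$, there is some $\delta>0$ such that $\widetilde K<0$ on $(3-\delta,3+\delta)$. An explicit $\delta$ follows from
\[
|\widetilde K'(\xi)|\le 2e\int_0^\infty u\,e^{-\cosh u}\,du,
\]
a computable constant of order $1$. This gives $\delta\gtrsim 10^{-3}$. By evenness of $\widetilde K$, the same holds near $\xi=-3$.

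\emph{Conclusion.} By Definition~\ref{def:CB} and Bochner's theorem, $K$ is not positive definite, so CB fails.

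The main obstacle is the certified quadrature. The true value, about $-0.0048$, is small compared with the integrand's magnitude of order $e^{-1}$, because the oscillation causes heavy cancellation. The error bounds therefore have to be tight, with absolute error well below $10^{-3}$, and they must be produced with outward rounding. An alternative I would try first as a cross-check is the series for $K_{i\xi}(1)$ in terms of $I_{\pm i\xi}(1)$:
\[
K_\nu(z)=\frac{\pi}{2}\,\frac{I_{-\nu}(z)-I_\nu(z)}{\sin(\nu\pi)} .
\]
This series converges rapidly at $z=1$, but it is sensitive to cancellation through $\sinh(3\pi)$ in the denominator. For that reason the direct interval quadrature is the route I would commit to.
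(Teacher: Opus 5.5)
Your proposal is correct and follows the paper's proof essentially step for step. It uses the same integral representation together with evenness of $K$ for the closed form, and a validated interval quadrature on a truncated range with an explicit tail bound; the paper truncates at $u=5$ and bounds the tail via $\cosh u\ge e^u/2$, and your $\sinh$ trick at $U=6$ is equally valid. Continuity of $\widetilde K$ then gives the open neighborhood, and your explicit $\delta$ from the derivative bound is a harmless refinement. One small caution: certifying only the sign needs quadrature error below about $10^{-3}$, but reproducing the stated interval, whose width is about $4\times10^{-14}$, needs error at that finer level.
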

\begin{proof}
The closed form follows from the standard integral representation
\[
  K_{i\xi}(1)=\int_0^\infty \exp(-\cosh t)\cos(\xi t)\,dt
\]
(Watson~\cite{Watson1944}, \S6.22; equivalently DLMF~\cite{DLMF}, \S10.32.8).
Since $\exp[-(\cosh u-1)]=\mathrm{e}\cdot\exp(-\cosh u)$ and this kernel
is even in $u$,
\[
  \widetilde K(\xi)
  =\mathrm{e}\cdot\int_{\mathbb R}\exp(-\cosh u)\cos(\xi u)\,du
  =2\mathrm{e}\cdot\int_0^\infty \exp(-\cosh u)\cos(\xi u)\,du
  =2\mathrm{e}\,K_{i\xi}(1).
\]
The displayed enclosure of $\widetilde K(3)$ is the
interval certificate of Lemma~\ref{thm:fourier-negative}
(Appendix~\ref{app:fourier-cert}).  Since $K\in L^1(\mathbb R)$, its
Fourier transform $\widetilde K$ is continuous, so the strict inequality
$\widetilde K(3)<0$ persists on an open neighborhood of $\xi=3$.  The
discrete-field results below use only the finite matrix condition
$\CBcond{v_0}{N}$.
\end{proof}

\begin{figure}[htbp]
\centering
\includegraphics[width=0.82\linewidth]{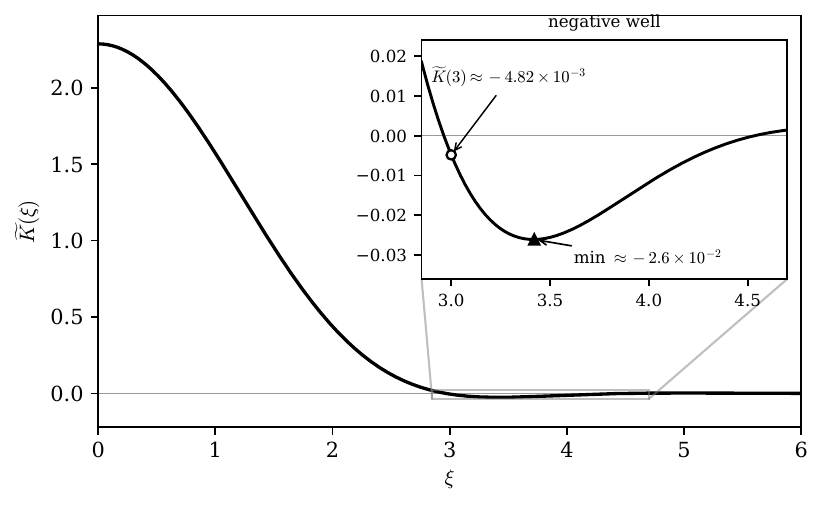}
\caption{Fourier transform
$\widetilde K(\xi)=\int_{\mathbb R}K(u)\,e^{-i\xi u}\,du$ of the one-bond
temporal kernel $K(u)=\exp[-(\cosh u-1)]$.  The transform is positive
near $\xi=0$ but becomes negative (inset) near $\xi\approx 3$, with the
rigorous enclosure $\widetilde K(3)\in[-4.818,\,-4.817]\times10^{-3}<0$
established in Lemma~\ref{thm:fourier-negative}.  By Bochner's theorem a negative
Fourier value shows that $K$ is not positive definite, so the
cosh--Bochner hypothesis (CB) fails for the continuous noncompact
kernel.}
\label{fig:bochner}
\end{figure}

Figure~\ref{fig:bochner} shows why this failure is genuine yet
quantitatively delicate.  The transform is dominated by a single positive
lobe of height $\widetilde K(0)=2\mathrm{e}\,K_0(1)\approx 2.29$, and its only
excursion below zero is one shallow, isolated well: $\widetilde K$ changes
sign near $\xi\approx 2.97$, reaches a minimum of about $-2.6\times10^{-2}$
near $\xi\approx 3.4$, and is positive again by $\xi\approx 4.6$.  The
certified value $\widetilde K(3)<0$ of
Lemma~\ref{thm:fourier-negative}, near $-4.8\times10^{-3}$, lies on the
near shoulder of this well (inset).  Because the dip is barely one percent of the central peak, the
violation cannot be read off by eye and must be established by the rigorous
enclosure of Appendix~\ref{app:fourier-cert}; equally, the smallness and
isolation of the well are precisely what let the discrete restriction of
Section~\ref{subsec:discrete-rescue} evade the obstruction, since the
periodized symbol~\eqref{eq:poisson} then samples a kernel whose narrow
negative region is negligible against the dominant positive mass near
$\xi=0$.

The failure of CB does not preclude reflection positivity of a
modified finite-field version of the model.  In that setting the
relevant question is no longer Bochner positivity on all of $\mathbb R$,
but positive semidefiniteness of a finite crossing-bond matrix.  This
passes from Fourier analysis on the noncompact group $\mathbb R$ to a
finite-dimensional Gram-matrix problem, equivalently the finite
harmonic-analysis question seen by the restricted difference set
$\Phi-\Phi$.  The remainder of this section records that criterion.

\subsection{The discrete-field finite-matrix criterion}
\label{subsec:discrete-rescue}

We restrict the admissible field values at each lattice site to
a finite, evenly spaced symmetric set
\begin{equation}
\label{eq:Phi-spacing}
\Phi:=v_0\,\{-N,-N+1,\ldots,N-1,N\}\subset\mathbb R,
\qquad v_0>0,\ N\in\mathbb N.
\end{equation}
We then need positive definiteness of $K$ only on the finite difference
set $\Phi-\Phi=v_0\,\{-2N,\ldots,2N\}$.

The restriction to a coarse alphabet acts as a high-frequency cutoff.
The continuous Bochner failure of Proposition~\ref{prop:CB-fails} is
confined to a narrow, shallow band of frequencies near $\xi\approx 3$
(Figure~\ref{fig:bochner}); sampling $K$ on the lattice $v_0\mathbb Z$
periodizes its Fourier transform, so that once the spacing $v_0$ is
large enough the dominant positive mass near $\xi=0$ outweighs that
isolated dip and the finite Toeplitz matrix can be positive.  The
periodization identity~\eqref{eq:poisson} below makes this quantitative;
the following definition fixes the finite criterion.

\begin{definition}[Discrete cosh--Bochner condition $\CBcond{v_0}{N}$]
\label{def:CB-disc}
For a spacing $v_0>0$ and an integer $N\ge 1$, write
$K_{\Phi(v_0,N)}\in\mathbb R^{(2N+1)\times(2N+1)}$ for the symmetric
Toeplitz matrix
\[
\bigl(K_{\Phi(v_0,N)}\bigr)_{j,k}\;:=\;K\!\bigl((j-k)\,v_0\bigr),
\qquad j,k\in\{-N,\ldots,N\},
\]
with $K$ as in~\eqref{eq:K-onebond}.  We say $\CBcond{v_0}{N}$ holds if
$K_{\Phi(v_0,N)}$ is positive semidefinite as a real symmetric matrix.
\end{definition}

$\CBcond{v_0}{N}$ is a finite-dimensional condition: for any specified
$v_0,N$ it can be verified by a direct eigenvalue computation, or by any
other rigorous certificate for positive semidefiniteness.  Its
Toeplitz/Herglotz extension on $\mathbb Z$ is controlled by the periodic
symbol
\begin{equation}
\label{eq:symbol}
f_{v_0}(\theta)\;:=\;\sum_{k\in\mathbb Z} K(k v_0)\,e^{-i k\theta},
\qquad \theta\in[-\pi,\pi].
\end{equation}
Here $\theta$ is the Fourier variable on $[-\pi,\pi]$ dual to the
lattice $\mathbb Z$ obtained after restricting $K$ to the spacing
$v_0\mathbb Z$.  Standard Poisson summation~\cite{Reed1975} identifies
$f_{v_0}$ as the periodization of $\widetilde K$:
\begin{equation}
\label{eq:poisson}
f_{v_0}(\theta)\;=\;\frac{1}{v_0}\sum_{n\in\mathbb Z}
\widetilde K\!\Bigl(\frac{\theta+2\pi n}{v_0}\Bigr).
\end{equation}
The summands sample $\widetilde K$ at frequency spacing $2\pi/v_0$, so
negativity of $\widetilde K$ at some frequencies does not by itself forbid
non-negativity of the symbol $f_{v_0}$ at coarse spacing.  Numerically the
symbol appears to become non-negative above a spacing near
\begin{equation}
\label{eq:v-star}
v_\ast\;\approx\;1.0604,
\end{equation}
with grid evaluation of~\eqref{eq:symbol} bracketing the threshold between
$v_{\rm fails}=1.060$ and $v_{\rm passes}=1.061$.  These computations are
evidence and a guide to finite checks; they are not used below as a
substitute for the finite hypothesis $\CBcond{v_0}{N}$.

\begin{remark}[Floating-point evidence for small finite $\Phi$]
\label{rem:CB-direct}
For any specific $(v_0,N)$, $\CBcond{v_0}{N}$ reduces to a finite
$(2N+1)\times(2N+1)$ eigenvalue check.  Double-precision computation
(Appendix~\ref{app:reproducibility}) corroborates the rigorous bounds
below: for the certified spacings $v_0\in\{1.2,1.5,2.5\}$ the smallest
eigenvalue stays well above zero and is essentially independent of $N$
(e.g.\ $\lambda_{\min}\approx 0.483$ at $v_0=1.5$ for all $N\le 200$),
whereas below the threshold $v_\ast$ some truncations are indefinite
(e.g.\ $\lambda_{\min}(v_0{=}1,N{=}4)\approx -8.2\times10^{-4}$),
consistent with the symbol becoming negative.  These are floating-point
figures (evidence, not a proof) which the next theorem replaces, for
$v_0\in\{1.2,1.5,2.5\}$, by a closed-form bound uniform in $N$.
\end{remark}

We now replace these numerics by a rigorous bound: $K(0)=1$, while the
off-diagonal entries decay fast enough that the symmetric Toeplitz matrix
is strictly diagonally dominant once the spacing is not too small.

\begin{theorem}[Rigorous diagonal-dominance certificate for $\CBcond{v_0}{N}$]
\label{thm:gershgorin}
For $v_0>0$ set
\[
s(v_0)\;:=\;2\sum_{k=1}^{\infty}\exp\!\bigl[-(\cosh(kv_0)-1)\bigr]
\;=\;2\sum_{k=1}^{\infty}K(kv_0),
\]
a convergent series.  By convexity of $k\mapsto\cosh(kv_0)$ it obeys the
closed-form bound
\begin{equation}
\label{eq:dd-bound}
s(v_0)\;\le\;\frac{2\,K(v_0)}{1-\rho(v_0)},
\qquad
\rho(v_0):=\exp\!\bigl[-(\cosh 2v_0-\cosh v_0)\bigr]\in(0,1).
\end{equation}
If $s(v_0)<1$, then for every $N\ge 1$ the matrix
$K_{\Phi(v_0,N)}$ of Definition~\ref{def:CB-disc} is positive definite,
with
\[
\lambda_{\min}\bigl(K_{\Phi(v_0,N)}\bigr)\;\ge\;1-s(v_0)\;>\;0
\qquad\text{uniformly in }N,
\]
so $\CBcond{v_0}{N}$ holds for all $N$.  The spacings
$v_0\in\{1.2,1.5,2.5\}$ are representative certified values in the
diagonally dominant regime $s(v_0)<1$ (Remark~\ref{rem:threshold}): they
are points at which the uniform-in-$N$ certificate is
discharged, not physically distinguished constants, and any
$v_0$ with $s(v_0)<1$ would serve equally well.
For the certified spacings used here, the closed-form
bound~\eqref{eq:dd-bound} gives the rigorous lower bounds
\[
\lambda_{\min}\ge 0.0894\ (v_0{=}1.2),\quad
\lambda_{\min}\ge 0.4825\ (v_0{=}1.5),\quad
\lambda_{\min}\ge 0.9882\ (v_0{=}2.5),
\]
each valid for all $N\ge 1$; see Figure~\ref{fig:gershgorin}.
\end{theorem}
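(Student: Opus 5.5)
The proof has three parts: the convexity bound \eqref{eq:dd-bound} for $s(v_0)$, a Gershgorin argument giving the eigenvalue bound uniformly in $N$, and interval evaluation of the bound at the three certified spacings.

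\emph{Convergence and the bound \eqref{eq:dd-bound}.} Set $a_k:=\cosh(kv_0)-1$, so $K(kv_0)=e^{-a_k}$. Since $x\mapsto\cosh x$ is convex, the increments $a_{k+1}-a_k$ are nondecreasing in $k$. Hence for every $k\ge 1$,
\[
a_{k+1}-a_k\;\ge\;a_2-a_1\;=\;\cosh 2v_0-\cosh v_0\;>\;0 .
\]
It follows that $K((k+1)v_0)\le\rho(v_0)\,K(kv_0)$, with $\rho(v_0)\in(0,1)$ because $v_0>0$. By induction, $K(kv_0)\le K(v_0)\rho^{k-1}$ for all $k\ge1$. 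Summing the geometric series shows that $s(v_0)$ converges and satisfies \eqref{eq:dd-bound}.

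\emph{Uniform eigenvalue bound.} $K_{\Phi(v_0,N)}$ is real symmetric with diagonal entries $K(0)=1$. Fix a row $j\in\{-N,\dots,N\}$. Its off-diagonal absolute row sum is
\[
\sum_{k\ne j}K\bigl(|j-k|v_0\bigr)\;\le\;2\sum_{m=1}^{2N}K(mv_0)\;\le\;s(v_0),
\]
because each distance $m=|j-k|\ge1$ occurs at most twice (once on each side of $j$) and every term is positive. By Gershgorin's circle theorem, every eigenvalue, which is real since the matrix is symmetric, lies in some interval $[1-r_j,1+r_j]$ with $r_j\le s(v_0)$. Therefore $\lambda_{\min}\ge 1-s(v_0)$. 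This bound does not depend on $N$, so $s(v_0)<1$ gives positive definiteness and hence $\CBcond{v_0}{N}$ for every $N$.

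\emph{Numerical certificates.} The remaining task is to check $1-2K(v_0)/(1-\rho(v_0))\ge$ the stated constants at $v_0\in\{1.2,1.5,2.5\}$. I expect this step to be the main obstacle, since it requires rigorous numerics rather than analysis. I would evaluate $\cosh v_0$ and $\cosh 2v_0$, and then the exponentials, with outward-rounded interval arithmetic, as in Appendix~\ref{app:fourier-cert}.

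As a sanity check at $v_0=1.5$: $\cosh 1.5\approx2.3524$, so $K(1.5)\approx e^{-1.3524}\approx0.2586$. Also $\cosh 3\approx10.068$, so $\rho\approx e^{-7.715}\approx4.5\times10^{-4}$. This gives $1-s\gtrsim 0.4825$, matching the stated value.

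At $v_0=1.2$ the margin is tighter. We have $K(1.2)\approx e^{-0.8107}\approx0.4446$ and $\rho\approx e^{-(3.6678-1.8107)}\approx0.156$, giving $1-s\gtrsim0.0893$. The claimed $0.0894$ is therefore within rounding. If the geometric bound turns out too lossy, I would instead bound $s(v_0)$ more sharply: sum the first two or three terms of the series exactly with interval enclosures, and apply the geometric tail bound only to the remainder. This also controls rounding in the final digit.

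At $v_0=2.5$ the off-diagonal mass is tiny ($K(2.5)\approx e^{-5.13}\approx0.0059$), so the stated bound $0.9882$ follows comfortably.
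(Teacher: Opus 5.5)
Your analytic part, the convexity estimate $K(kv_0)\le K(v_0)\rho(v_0)^{k-1}$ plus the Gershgorin row-sum bound giving $\lambda_{\min}\ge 1-s(v_0)$ uniformly in $N$, is correct and is exactly the paper's argument. The genuine gap is the last numerical step at $v_0=2.5$, where you say $0.9882$ ``follows comfortably.'' It does not. We have $\cosh 2.5-1\approx 5.13229$ while $-\ln 0.0059\approx 5.13280$, so $K(2.5)\approx 0.0059030>0.0059$. Hence $1-s(2.5)=1-2K(2.5)-2K(5)-\cdots\approx 0.988194<0.9882$.

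No sharper estimate of $s$ can repair this, because the bound itself is false for large $N$. The entries $K(2.5k)$ with $k\ge 2$ are below $10^{-31}$. By Weyl's inequality, $\lambda_{\min}(K_{\Phi(2.5,N)})$ therefore lies within $10^{-31}$ of the tridiagonal Toeplitz value $1-2K(2.5)\cos\bigl(\pi/(2N+2)\bigr)$. That value is below $0.9882$ for every $N\ge 49$ (about $0.988195$ at $N=100$) and decreases to $1-2K(2.5)$. The outward-rounded evaluation you propose would expose this immediately. What can honestly be certified is $\lambda_{\min}\ge 0.98819$, i.e.\ $0.9881$ to four places. The paper's proof makes the same rounding slip when it records $s\le 0.0118$ at $v_0=2.5$, so you should not take that digit on trust.

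At $v_0=1.2$ your caution is warranted, but your intermediate numbers are off. In fact $\cosh 2.4\approx 5.5569$, not $3.6678$, so $\rho(1.2)\approx e^{-3.7463}\approx 0.0236$. With your $\rho\approx 0.156$, the closed form would exceed $1$ and certify nothing. With the correct $\rho$, \eqref{eq:dd-bound} gives $s(1.2)\le 0.91063$, hence only $\lambda_{\min}\ge 0.08937<0.0894$. So the statement's assertion that the closed-form bound itself yields $0.0894$ is not literally true. Your fallback does close this case. Sum $2[K(1.2)+K(2.4)]\approx 2(0.444567+0.010494)$ with interval enclosures and bound the tail geometrically starting from $K(3.6)\approx 3\times 10^{-8}$. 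This gives $s(1.2)\le 0.91013$, so $\lambda_{\min}\ge 0.08987>0.0894$. At $v_0=1.5$ the closed form gives $1-s\ge 0.48254$, as you found, and that certified value is fine.
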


\begin{figure}[htbp]
\centering
\includegraphics[width=0.82\linewidth]{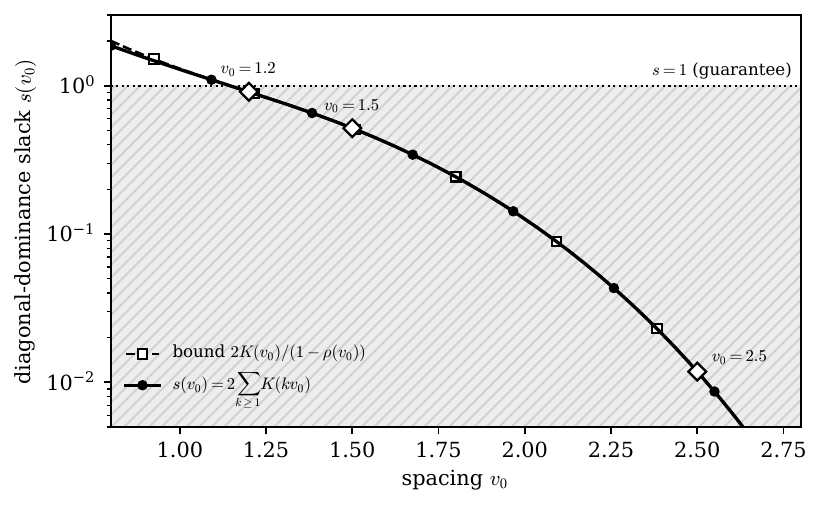}
\caption{Diagonal-dominance slack
$s(v_0)=2\sum_{k\ge 1}K(kv_0)$ (solid line, filled circles) and its
closed-form upper bound
$2K(v_0)/\bigl(1-\rho(v_0)\bigr)$ (dashed line, open squares), versus the
field spacing $v_0$, on a logarithmic scale.  Where $s(v_0)<1$ (hatched
region, below the dotted $s=1$ line), Theorem~\ref{thm:gershgorin}
guarantees
$\lambda_{\min}\bigl(K_{\Phi(v_0,N)}\bigr)\ge 1-s(v_0)>0$ for every $N$,
so $\CBcond{v_0}{N}$ holds uniformly in $N$.  The three certified
spacings $v_0\in\{1.2,1.5,2.5\}$ (open diamonds) give the rigorous bounds
$s\le 0.911,\,0.518,\,0.012$, respectively.}
\label{fig:gershgorin}
\end{figure}

\begin{proof}
Every diagonal entry of $K_{\Phi(v_0,N)}$ equals $K(0)=1$.  For a fixed
row index $j\in\{-N,\ldots,N\}$, the sum of the off-diagonal entries in
absolute value is
\[
\sum_{k\ne j}K\bigl((j-k)v_0\bigr)
=\sum_{\substack{m=j-k\\ m\ne 0}}K(mv_0)
\;\le\;\sum_{m\in\mathbb Z\setminus\{0\}}K(mv_0)
\;=\;2\sum_{m\ge 1}K(mv_0)\;=\;s(v_0),
\]
using $K>0$ and $K(-u)=K(u)$.  Hence if $s(v_0)<1$ the diagonal entry
$1$ strictly exceeds the off-diagonal absolute row sum, so
$K_{\Phi(v_0,N)}$ is a real symmetric strictly diagonally dominant
matrix with positive diagonal; by Gershgorin's theorem~\cite{HornJohnson}
its (real) eigenvalues lie in $[\,1-s(v_0),\,1+s(v_0)\,]$, giving
$\lambda_{\min}\ge 1-s(v_0)>0$ independently of $N$.
For~\eqref{eq:dd-bound}, the map $k\mapsto\cosh(kv_0)$ is convex, so for
$k\ge 1$ the secant slope from $1$ to $k$ dominates that from $1$ to
$2$, i.e.\ $\cosh(kv_0)\ge\cosh(v_0)+(k-1)\bigl(\cosh 2v_0-\cosh
v_0\bigr)$.  Therefore
$K(kv_0)\le K(v_0)\,\rho(v_0)^{\,k-1}$.  Since $\cosh$ is strictly
increasing on $[0,\infty)$ and $2v_0>v_0>0$, we have $\cosh 2v_0>\cosh
v_0$, so the ratio
$\rho(v_0)=\exp[-(\cosh 2v_0-\cosh v_0)]$ lies in $(0,1)$; summing the
geometric series then gives $\sum_{k\ge1}K(kv_0)\le K(v_0)/(1-\rho(v_0))$,
which is~\eqref{eq:dd-bound}.  Substituting $v_0\in\{1.2,1.5,2.5\}$ yields
$s\le 0.9106,\,0.5175,\,0.0118$ respectively, hence the stated bounds.
The quantities in~\eqref{eq:dd-bound} are explicit closed forms in
$\cosh$ and $\exp$; the inequality $s(v_0)<1$ is therefore decidable by
exact/interval evaluation and does not rely on floating-point
eigenvalues.
\end{proof}

The conclusion $\lambda_{\min}\bigl(K_{\Phi(v_0,N)}\bigr)\ge 1-s(v_0)>0$
is strict positive definiteness, which is stronger than the
positive semidefiniteness that Definition~\ref{def:CB-disc} requires for
$\CBcond{v_0}{N}$; at the certified spacings the condition therefore holds
in this stronger form, with an eigenvalue margin uniform in $N$.

Figure~\ref{fig:gershgorin} plots both the exact slack $s(v_0)$ and the
closed-form bound~\eqref{eq:dd-bound}, and two features are worth drawing
out.  First, for $v_0\gtrsim 1$ the two curves are visually
indistinguishable, so the rigorous bound sacrifices almost nothing relative
to the exact series; the decidable criterion $s(v_0)<1$ is essentially as
sharp as the diagonal-dominance argument permits.  Second, because
$s(v_0)$ inherits the super-exponential decay of
$K(kv_0)=\exp[-(\cosh(kv_0)-1)]$, the guaranteed eigenvalue floor
$1-s(v_0)$ widens extremely fast once the threshold is cleared: the slack
falls from $s\approx 0.91$ at $v_0=1.2$ to $s\approx 0.012$ at $v_0=2.5$.
These three certified spacings range from just inside the diagonally
dominant region to deep within it, matching the per-spacing eigenvalue
floors recorded in Theorem~\ref{thm:gershgorin}.  It is precisely this
$N$-independent
eigenvalue margin that discharges the finite-matrix hypothesis
$\CBcond{v_0}{N}$ uniformly in $N$ and so, through the
slice/Gram argument of Theorem~\ref{thm:rp-discrete}, delivers reflection
positivity of the discrete-field model at every certified spacing
(Corollary~\ref{cor:rp-verified}), the positive counterpart to the
continuous Bochner failure of Figure~\ref{fig:bochner}.

\begin{remark}[Numerical symbol threshold and the certified regime]
\label{rem:threshold}
The numerical symbol threshold $v_\ast\approx1.06$ of~\eqref{eq:v-star}
should be distinguished from the rigorous threshold of the
diagonal-dominance certificate: both the exact slack $s(v_0)$ and its
closed-form bound~\eqref{eq:dd-bound} cross the guarantee line $s=1$ near
$v_0\approx1.15$.  In the intervening range
$v_\ast\lesssim v_0\lesssim 1.15$ the periodized symbol appears
non-negative numerically, but the diagonal-dominance certificate does
not yet apply, so $\CBcond{v_0}{N}$ there remains a numerical observation
rather than a theorem.  The certified spacings $v_0\in\{1.2,1.5,2.5\}$
all lie above this range.
\end{remark}

Now define the \emph{discrete-field finite-volume measure}.  Each site
$z\in\Lambda_L$ carries a field $\phi(z)\in\Phi=\Phi(v_0,N)$.  Because
$\Phi$ is finite, the configuration space $\Phi^{|\Lambda_L|}$ is finite
and no pin is needed for normalizability.  The finite alphabet breaks the
continuous global-shift symmetry of the unrestricted gradient action,
rather than quotienting it.  Replace
the Lebesgue integration in~\eqref{eq:pinned-measure} by the counting
measure on $\Phi$ at every site:
\begin{equation}
\label{eq:pinned-measure-disc}
d\mu_L^{\Phi}(\phi)
:=Z_{L,\Phi}^{-1}\,e^{-S_L[\phi]}
\prod_{z\in\Lambda_L}d\nu_\Phi(\phi(z)),
\end{equation}
where $d\nu_\Phi$ is the uniform counting measure on $\Phi$ and
$Z_{L,\Phi}$ normalizes $\mu_L^\Phi$.  This is a finite measure on
$\Phi^{|\Lambda_L|}$; it is invariant under the reflection $\theta$ of
Section~\ref{subsec:halfspace} because $S_L$ is and the product is over
all sites.

\begin{definition}[Half-lattice field reflection]
\label{def:field-reflection}
Write $\phi_t$ for the spatial slice field at time $t$, i.e.\ the
restriction of a configuration $\phi$ to $B_L\times\{t\}$.  The temporal
reflection $\theta$ of Section~\ref{subsec:halfspace} induces the
\emph{half-lattice field reflection} $\Theta$, which sends $\phi$ to the
positive-half field
\[
  (\Theta\phi)_t:=\phi_{(7-t)\bmod 8}\qquad(t\in\{0,1,2,3\}),
\]
carrying the negative-half slices $\phi_4,\dots,\phi_7$ onto the positive
half-lattice $\Lambda_L^+$ slice-for-slice.  Equivalently, $\Theta\phi$
is the restriction to $\Lambda_L^+$ of the pulled-back configuration
$\theta\phi$.
\end{definition}

\begin{theorem}[Conditional discrete-field reflection positivity]
\label{thm:rp-discrete}
Fix $v_0>0$ and $N\ge 1$, and suppose $\CBcond{v_0}{N}$ holds for
$\Phi=\Phi(v_0,N)$.  Then the discrete-field finite-volume measure
$\mu_L^\Phi$ in~\eqref{eq:pinned-measure-disc} satisfies the
Osterwalder--Schrader reflection-positivity inequality
\[
\langle \theta F,F\rangle_{\mu_L^\Phi}\;\ge\; 0
\]
for every observable $F$ that is a function of the field on the
positive-time half-lattice $\Lambda_L^+=B_L\times\{0,1,2,3\}$.
\end{theorem}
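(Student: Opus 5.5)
The plan is the standard product/Gram argument of~\cite{FIS1978,Biskup2009}, carried out slice by slice on the finite configuration space $\Phi^{|\Lambda_L|}$. First I split the action. Write $S_L=S^+ + S^- + S^{\times}$, where $S^+$ collects all bonds inside $\Lambda_L^+$ (spatial bonds in slices $0,\dots,3$ and temporal bonds $0\to1,1\to2,2\to3$), $S^-$ collects the bonds inside $\Lambda_L^-$, and $S^\times$ is the sum of the two crossing-bond layers $3\to4$ and $7\to0$. Because $\theta$ is a symmetry of $S_L$ and maps each half's bonds bijectively onto the other half's, we have $S^-[\phi]=S^+[\Theta\phi]$ with $\Theta$ as in Definition~\ref{def:field-reflection}. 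Hence, with $A(\psi):=e^{-S^+[\psi]}$ for a positive-half field $\psi$,
\[
e^{-S_L[\phi]}=A(\phi|_{\Lambda_L^+})\,A(\Theta\phi)\prod_{\mathbf x\in B_L}K\bigl(\phi_3(\mathbf x)-\phi_4(\mathbf x)\bigr)\,K\bigl(\phi_0(\mathbf x)-\phi_7(\mathbf x)\bigr).
\]
This holds for either spatial boundary condition, since spatial bonds never cross a plane. Note also that $(\Theta\phi)_3=\phi_4$ and $(\Theta\phi)_0=\phi_7$, so every crossing factor has the form $K(a-b)$ with $a$ a positive-half slice value and $b$ the corresponding value of $\Theta\phi$.

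Second, I use $\CBcond{v_0}{N}$ to factor the crossing kernel. The matrix $M=K_{\Phi(v_0,N)}$, with entries $M_{a,b}=K(a-b)$ for $a,b\in\Phi$, is real symmetric positive semidefinite. By the spectral theorem it can be written $M=GG^{\mathsf T}$, that is, $K(a-b)=\sum_{r=1}^{2N+1} g_r(a)\,g_r(b)$ with real functions $g_r$ on $\Phi$. Substituting this into each of the $2|B_L|$ crossing factors and expanding the product gives
\[
\prod K(\cdots)=\sum_{\mathbf r} G_{\mathbf r}(\phi|_{\Lambda_L^+})\,G_{\mathbf r}(\Theta\phi),
\]
a finite sum over multi-indices $\mathbf r=(r_{\mathbf x}^{(3)},r_{\mathbf x}^{(0)})_{\mathbf x\in B_L}$. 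Here $G_{\mathbf r}(\psi):=\prod_{\mathbf x} g_{r^{(3)}_{\mathbf x}}(\psi_3(\mathbf x))\,g_{r^{(0)}_{\mathbf x}}(\psi_0(\mathbf x))$ is real and depends only on a positive-half field. This is the Schur-product step, done explicitly.

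Third, I assemble the Gram form. Since the counting measure $\prod_z d\nu_\Phi$ factors over $\Lambda_L^+$ and $\Lambda_L^-$, and $\Theta$ restricted to the negative-half variables is a bijection onto positive-half configurations, the sum over configurations becomes a double sum over pairs $(\psi,\psi')=(\phi|_{\Lambda_L^+},\Theta\phi)$. Using $(\theta F)(\phi)=F(\Theta\phi)$, we get
\[
\langle\theta F,F\rangle_{\mu_L^\Phi}=Z_{L,\Phi}^{-1}\sum_{\mathbf r}\overline{\Bigl(\sum_{\psi'}F(\psi')A(\psi')G_{\mathbf r}(\psi')\Bigr)}\Bigl(\sum_{\psi}F(\psi)A(\psi)G_{\mathbf r}(\psi)\Bigr)=Z_{L,\Phi}^{-1}\sum_{\mathbf r}\Bigl|\sum_\psi F A G_{\mathbf r}\Bigr|^2\ge0,
\]
where the conjugate may be moved inside because $A$ and $G_{\mathbf r}$ are real.

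I expect the main obstacle to be the bookkeeping in the first and third steps, not the positivity itself. One must check that $\theta$ maps the bond set of $\Lambda_L^+$ exactly onto that of $\Lambda_L^-$, which is clear from the slice pairing $t\leftrightarrow 7-t$. One must also check that each crossing factor pairs a positive-half value with the $\Theta$-image of the same slice: for the plane $3\mid4$ this pairs $\psi_3$ with $\psi'_3=\phi_4$, and for the plane $7\mid0$ it pairs $\psi_0$ with $\psi'_0=\phi_7$. Evenness of $K$ makes the orientation of the difference irrelevant. Finally, one must handle the Hermitian convention so that the same vector $\sum_\psi F A G_{\mathbf r}$ appears on both sides. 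Finiteness of $\Phi$ makes every interchange of sums trivial.
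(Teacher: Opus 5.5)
Your proposal is correct and follows essentially the same route as the paper's proof. Both arguments use the bond factorization $e^{-S_L}=A(\phi|_{\Lambda_L^+})\,A(\Theta\phi)\prod_{\mathbf x}K(\phi_3-\phi_4)\,K(\phi_0-\phi_7)$, then the real Gram factorization $K(a-b)=\sum_r g_r(a)g_r(b)$ supplied by $\CBcond{v_0}{N}$ at every crossing site, and then use the reality of $A$ and $G_{\mathbf r}$ to reduce $\langle\theta F,F\rangle_{\mu_L^\Phi}$ to the sum of squares $Z_{L,\Phi}^{-1}\sum_{\mathbf r}|\sum_\psi F A G_{\mathbf r}|^2\ge 0$.
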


\begin{proof}
This is the standard nearest-neighbor reflection-positivity argument
of~\cite{FIS1978}; we give it in full because the model is finite.  Order
the slices so that $\Lambda_L^+=B_L\times\{0,1,2,3\}$ and
$\Lambda_L^-=B_L\times\{4,5,6,7\}$, and recall (Section~\ref{subsec:halfspace})
that the only bonds joining the two halves are the two crossing temporal
bonds $t=3\!\to\!4$ and $t=7\!\to\!0$.  The Boltzmann weight factorizes
over bonds as
\[
e^{-S_L[\phi]}
= A(\phi_0,\phi_1,\phi_2,\phi_3)\;
  A(\phi_7,\phi_6,\phi_5,\phi_4)\;
  \prod_{\mathbf x\in B_L}
  K\!\bigl(\phi_3(\mathbf x)-\phi_4(\mathbf x)\bigr)\,
  K\!\bigl(\phi_0(\mathbf x)-\phi_7(\mathbf x)\bigr),
\]
where $A$ collects all bonds internal to a half (the three internal
temporal bonds and the spatial bonds of its four slices); the two
$A$-factors have the same functional form because $\theta$ maps
the negative half onto the positive half bond-for-bond.  In terms of the
half-lattice field reflection $\Theta$ of
Definition~\ref{def:field-reflection}, the negative-half factor is
$A\bigl((\Theta\phi)_0,\dots,(\Theta\phi)_3\bigr)$
and, since $K$ is even, each crossing factor couples a positive boundary
slice to its reflected partner:
$K(\phi_3-\phi_4)=K\bigl(\phi_3-(\Theta\phi)_3\bigr)$ and
$K(\phi_0-\phi_7)=K\bigl(\phi_0-(\Theta\phi)_0\bigr)$.

By hypothesis $\CBcond{v_0}{N}$, the $(2N+1)\times(2N+1)$ matrix
$\bigl(K(a-b)\bigr)_{a,b\in\Phi}$ is positive semidefinite, so it admits
a (real) factorization $K(a-b)=\sum_{\mu} u_\mu(a)\,u_\mu(b)$ with
finitely many vectors $u_\mu\in\mathbb R^\Phi$.  Applying this at each
site $\mathbf x\in B_L$ on each of the two reflection planes expands the
crossing product as a finite sum
\[
\prod_{\mathbf x}K\!\bigl(\phi_3(\mathbf x)-\phi_4(\mathbf x)\bigr)
K\!\bigl(\phi_0(\mathbf x)-\phi_7(\mathbf x)\bigr)
=\sum_{\alpha}
g_\alpha(\phi_0,\phi_3)\,g_\alpha\!\bigl((\Theta\phi)_0,(\Theta\phi)_3\bigr),
\]
where $\alpha$ ranges over the (finite) multi-index of the
$u_\mu$-factorizations over all boundary sites and each $g_\alpha$ is a
real function of the two positive-half boundary slices $\phi_0,\phi_3$.
A function $F$ on $\Lambda_L^+$ depends on $(\phi_0,\phi_1,\phi_2,\phi_3)$,
while $F(\theta\phi)$ depends on the reflected slices and equals
$F\bigl((\Theta\phi)_0,\dots,(\Theta\phi)_3\bigr)$.  Summing over the
negative-half configuration is, after the relabeling
$\phi_{4},\dots,\phi_{7}\leftrightarrow(\Theta\phi)_{3},\dots,(\Theta\phi)_0$,
the complex conjugate of the sum over the positive-half configuration,
since the weights $A$ and $g_\alpha$ are real-valued.  Hence
\[
\langle\theta F,F\rangle_{\mu_L^\Phi}
=Z_{L,\Phi}^{-1}\sum_{\alpha}\Bigl|\,
\sum_{\phi_0,\phi_1,\phi_2,\phi_3}
F(\phi_0,\dots,\phi_3)\,A(\phi_0,\dots,\phi_3)\,g_\alpha(\phi_0,\phi_3)
\Bigr|^{2}\;\ge\;0,
\]
a sum of squares, where each inner sum runs over $\Phi$-valued slice
fields.  This is the claimed inequality.  (Only positive
semidefiniteness of the one-bond matrix is used, not strict positivity;
the argument is insensitive to the global additive zero mode because no
pin is imposed.)
\end{proof}

Combining Theorem~\ref{thm:rp-discrete} with the rigorous
certificate of Theorem~\ref{thm:gershgorin} yields a reflection-positivity
statement for the discrete-field model at concrete spacings that does
not rest on floating-point eigenvalues and holds for all box truncations.

\begin{corollary}[Discrete-field reflection positivity at certified spacings, rigorous and uniform in $N$]
\label{cor:rp-verified}
For each spacing $v_0\in\{1.2,1.5,2.5\}$ and every $N\ge 1$, the
discrete-field finite-volume measure
$\mu_L^{\Phi(v_0,N)}$ in~\eqref{eq:pinned-measure-disc} satisfies the
Osterwalder--Schrader reflection-positivity inequality
$\langle\theta F,F\rangle_{\mu_L^{\Phi(v_0,N)}}\ge 0$ for every
observable $F$ supported on the positive-time half-lattice
$\Lambda_L^+=B_L\times\{0,1,2,3\}$.
\end{corollary}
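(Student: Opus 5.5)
The corollary should follow by chaining Theorem~\ref{thm:gershgorin} into Theorem~\ref{thm:rp-discrete}; no new analysis is needed, only a check that the hypotheses line up. Fix $v_0\in\{1.2,1.5,2.5\}$ and $N\ge1$.

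First, I will establish the strict inequality $s(v_0)<1$ rigorously. For this I will use the closed-form bound~\eqref{eq:dd-bound},
\[
s(v_0)\le \frac{2K(v_0)}{1-\rho(v_0)}.
\]
I will evaluate its right-hand side at the three spacings by interval arithmetic, which gives $s\le 0.9106,\,0.5175,\,0.0118$ respectively. This is the only step with any real content. The delicate case is $v_0=1.2$, where the margin is about $0.09$. Even there, a crude enclosure of $\cosh(1.2)$, $\cosh(2.4)$ and $\exp$ to three or four digits suffices, so I expect no genuine obstacle. The point to stress is that the comparison uses only the explicit closed form, not floating-point eigenvalues.

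Next, Theorem~\ref{thm:gershgorin} then yields
\[
\lambda_{\min}\bigl(K_{\Phi(v_0,N)}\bigr)\ge 1-s(v_0)>0
\]
for every $N\ge1$. In particular $K_{\Phi(v_0,N)}$ is positive semidefinite, which is exactly $\CBcond{v_0}{N}$ of Definition~\ref{def:CB-disc}.

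Finally, I will apply Theorem~\ref{thm:rp-discrete} with this $v_0$ and $N$. Its only hypothesis is $\CBcond{v_0}{N}$, and its conclusion is the inequality $\langle\theta F,F\rangle_{\mu_L^{\Phi(v_0,N)}}\ge0$ for every $F$ supported on $\Lambda_L^+$, for any even $L$ and either spatial boundary condition. Since $v_0$ and $N$ were arbitrary in the stated ranges, this gives the corollary uniformly in $N$.
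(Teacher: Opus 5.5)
Your proposal is correct and matches the paper's proof. The closed-form bound~\eqref{eq:dd-bound} gives $s(v_0)<1$ at the three spacings, so Theorem~\ref{thm:gershgorin} discharges $\CBcond{v_0}{N}$ uniformly in $N$, and Theorem~\ref{thm:rp-discrete} then yields $\langle\theta F,F\rangle\ge0$; your point that only the explicit inequality is used, not floating-point eigenvalues, is the same one the paper makes.
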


\begin{proof}
For each $v_0\in\{1.2,1.5,2.5\}$ the closed-form
bound~\eqref{eq:dd-bound} gives $s(v_0)<1$, so
Theorem~\ref{thm:gershgorin} discharges $\CBcond{v_0}{N}$ rigorously and
uniformly in $N$ (with the displayed lower bounds on $\lambda_{\min}$).
Theorem~\ref{thm:rp-discrete} then applies for every $N\ge 1$.  The
certificate uses only the explicit inequality $s(v_0)<1$
in~\eqref{eq:dd-bound}, not the floating-point eigenvalues of
Remark~\ref{rem:CB-direct}.
\end{proof}

\begin{remark}[Scope of the discrete-field result]
\label{rem:rescue-scope}
The discrete-field model is a genuine modification of the action
functional (a finite-alphabet cutoff in the spirit of regularizations
standard in lattice field theory~\cite{Glimm1987}) and
Theorem~\ref{thm:rp-discrete} establishes OS reflection positivity for it
only after the finite matrix condition $\CBcond{v_0}{N}$ is verified.  At
the certified spacings $v_0\in\{1.2,1.5,2.5\}$ the result is
unconditional in the strict sense, following from the closed-form
inequality~\eqref{eq:dd-bound} and Theorem~\ref{thm:gershgorin}; at
small spacings some truncations are indefinite
(Remarks~\ref{rem:CB-direct} and~\ref{rem:threshold}), and at other
spacings the claim is numerical unless a separate diagonal-dominance,
exact $LDL^\top$, or interval-Cholesky certificate is supplied.  The result does not establish
OS positivity for the continuous noncompact measure $\mu_L$, nor that the
discrete-field model shares its continuum limit; continuum limits with
$v_0\to0$, $N\to\infty$ are a separate scaling problem
(Section~\ref{sec:continuum-limit}).
\end{remark}

\subsection{Discrete-field transfer operator and RP inner product}
\label{subsec:transfer-disc}

Fix $v_0>0$, $N\ge 1$ with $\CBcond{v_0}{N}$, and write
$\Phi=\Phi(v_0,N)$.  A \emph{slice configuration} is a map
$\phi:B_L\to\Phi$; write $\mathcal{X}_\Phi:=\Phi^{|B_L|}$ for the
finite set of such maps.  The spatial action on one slice and the
one-step temporal action between adjacent slices are
\[
S_{\rm sp}(\phi)
=\sum_{x\in B_L}\sum_{\hat\mu\ \text{spatial}}
\left(\cosh(\Delta_{\hat\mu}\phi(x))-1\right),
\qquad
S_{\rm temp}(\phi,\psi)
=\sum_{x\in B_L}\left(\cosh(\psi(x)-\phi(x))-1\right),
\]
with $\phi,\psi\in\mathcal{X}_\Phi$.  The corresponding
\emph{discrete one-step transfer kernel} is
\begin{equation}
\label{eq:transfer-kernel-disc}
\mathcal{T}^{\Phi}(\phi,\psi)
:=\exp\!\left(-\tfrac12 S_{\rm sp}(\phi)\right)\,
 \exp\!\left(-S_{\rm temp}(\phi,\psi)\right)\,
 \exp\!\left(-\tfrac12 S_{\rm sp}(\psi)\right),
\qquad \phi,\psi\in\mathcal{X}_\Phi .
\end{equation}
This kernel is defined directly by~\eqref{eq:transfer-kernel-disc}; it is
strictly positive (exponential of a real number) and symmetric in
$(\phi,\psi)$ because $\cosh$ is even.  The continuous-field kernel
\eqref{eq:transfer-kernel} of Section~\ref{subsec:transfer} is the same
expression with $\Phi$-valued slice fields replaced by real-valued ones;
Lemma~\ref{thm:transfer} records the corresponding pointwise statement
there.

Equip $\mathbb{C}^{\mathcal{X}_\Phi}$ with the
\emph{RP slice inner product} weighted by the spatial half-measure
\begin{equation}
\label{eq:RP-inner}
\langle f,g\rangle_{\rm RP}
:=\sum_{\phi\in\mathcal{X}_\Phi}
\overline{f(\phi)}\,g(\phi)\,
e^{-S_{\rm sp}(\phi)},
\qquad f,g:\mathcal{X}_\Phi\to\mathbb{C}.
\end{equation}
This Hermitian form (conjugate-linear in $f$ and linear in $g$, with the
same conjugation convention as the Osterwalder--Schrader pairing
in~\eqref{eq:rp-def}) is the inner product induced on single-slice
observables by the
Gibbs weight $e^{-S_L[\phi]}$ when the temporal bonds to neighboring
slices are treated as external data; it is the finite-dimensional
specialization of the Osterwalder--Schrader / transfer-matrix inner
product used in~\cite{FIS1978,Glimm1987,OsterwalderSeiler1978}.
Write $w(\phi):=e^{-S_{\rm sp}(\phi)/2}$ and
$K(\phi,\psi):=\exp[-S_{\rm temp}(\phi,\psi)]$, so
$\mathcal{T}^{\Phi}(\phi,\psi)=w(\phi)\,K(\phi,\psi)\,w(\psi)$.
Splitting the spatial action symmetrically between the two slices in
this way is exactly what makes the transfer step self-adjoint in
$\langle\cdot,\cdot\rangle_{\rm RP}$: since $K(\phi,\psi)=K(\psi,\phi)$,
the spatial weight $e^{-S_{\rm sp}}$ carried by~\eqref{eq:RP-inner}
absorbs the asymmetric ratio $w(\phi)/w(\psi)$ appearing in $T_\Phi$
below, leaving the symmetric kernel $w(\phi)\,K(\phi,\psi)\,w(\psi)$.
Define the \emph{one-step transfer operator}
$T_\Phi:\mathbb{C}^{\mathcal{X}_\Phi}\to\mathbb{C}^{\mathcal{X}_\Phi}$ by
\begin{equation}
\label{eq:transfer-op}
(T_\Phi f)(\psi)
:=\sum_{\phi\in\mathcal{X}_\Phi}
K(\phi,\psi)\,\frac{w(\phi)}{w(\psi)}\,f(\phi),
\qquad \psi\in\mathcal{X}_\Phi .
\end{equation}
This is the transfer step that is self-adjoint and positive in
$\langle\cdot,\cdot\rangle_{\rm RP}$; the full positive kernel
$\mathcal{T}^{\Phi}(\phi,\psi)$ in~\eqref{eq:transfer-kernel-disc} is
the symmetrized matrix element
$w(\psi)\,(T_\Phi)_{\psi,\phi}\,w(\phi)$ in the slice basis.

\begin{corollary}[Positive finite-box transfer matrix in the RP inner product]
\label{cor:transfer-psd}
Suppose $\CBcond{v_0}{N}$ holds for $\Phi=\Phi(v_0,N)$ (so that, by
Theorem~\ref{thm:rp-discrete}, the finite-box discrete-field measure
$\mu_L^\Phi$ is reflection positive).  Then the
one-step transfer matrix on
$\mathcal{X}_\Phi=\Phi^{|B_L|}$, equivalently the operator $T_\Phi$
in~\eqref{eq:transfer-op}, is positive self-adjoint in the explicit RP
inner product~\eqref{eq:RP-inner}:
\[
\langle f,T_\Phi f\rangle_{\rm RP}\;\ge\; 0
\qquad\text{for all }f:\mathcal{X}_\Phi\to\mathbb{C},
\]
and $\langle f,T_\Phi g\rangle_{\rm RP}=\langle T_\Phi f,g\rangle_{\rm RP}$
for all $f,g$.  Equivalently, the symmetric matrix
$\bigl(\mathcal{T}^{\Phi}(\phi,\psi)\bigr)_{\phi,\psi\in\mathcal{X}_\Phi}$
is positive semidefinite.
\end{corollary}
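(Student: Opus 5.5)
The plan is to reduce everything to the single-site Toeplitz matrix $K_{\Phi(v_0,N)}$ of Definition~\ref{def:CB-disc}, through a tensor-product identification followed by a diagonal congruence. Theorem~\ref{thm:rp-discrete} itself will not be invoked; only its hypothesis $\CBcond{v_0}{N}$ is used. We will first prove the matrix statement and then translate it into the RP inner product~\eqref{eq:RP-inner}.

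\emph{Step 1 (Kronecker structure of the temporal kernel).} Write the temporal kernel as
\[
K(\phi,\psi)=\prod_{x\in B_L}K\bigl(\psi(x)-\phi(x)\bigr),
\]
with $\phi,\psi\in\mathcal X_\Phi=\Phi^{B_L}$. This is exactly the matrix element of the Kronecker product $\bigotimes_{x\in B_L}K_{\Phi(v_0,N)}$ in the product basis of $(\mathbb C^\Phi)^{\otimes|B_L|}\cong\mathbb C^{\mathcal X_\Phi}$. The eigenvalues of a Kronecker product are products of the factors' eigenvalues, and the eigenvectors are tensor products of the factors' eigenvectors. Hence a Kronecker product of real symmetric positive semidefinite matrices is real symmetric and positive semidefinite. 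Concretely, one can reuse the factorization $K(a-b)=\sum_\mu u_\mu(a)u_\mu(b)$ from the proof of Theorem~\ref{thm:rp-discrete} site by site, which writes $K(\cdot,\cdot)$ as a finite sum of rank-one terms $U_\alpha U_\alpha^\top$. Under $\CBcond{v_0}{N}$ this gives $K(\cdot,\cdot)\succeq0$.

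\emph{Step 2 (diagonal congruence).} The kernel $\mathcal T^\Phi(\phi,\psi)=w(\phi)K(\phi,\psi)w(\psi)$ equals $DKD$ with $D=\mathrm{diag}(w)$ real. Then
\[
c^\ast DKDc=(Dc)^\ast K(Dc)\ge0 ,
\]
so $\mathcal T^\Phi$ is symmetric and positive semidefinite. This is the ``equivalently'' clause of Corollary~\ref{cor:transfer-psd}.

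\emph{Step 3 (transfer to the RP form).} We unwind~\eqref{eq:RP-inner} and~\eqref{eq:transfer-op}, using $e^{-S_{\rm sp}(\psi)}=w(\psi)^2$:
\[
\langle f,T_\Phi g\rangle_{\rm RP}
=\sum_{\psi}\overline{f(\psi)}\,w(\psi)^2\sum_\phi K(\phi,\psi)\frac{w(\phi)}{w(\psi)}g(\phi)
=\sum_{\phi,\psi}\overline{(wf)(\psi)}\,\mathcal T^\Phi(\psi,\phi)\,(wg)(\phi).
\]
Here we used the symmetry $K(\phi,\psi)=K(\psi,\phi)$, which holds because $\cosh$ is even. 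Taking $g=f$ and applying Step~2 to the vector $wf$ gives positivity. Self-adjointness follows because $\mathcal T^\Phi$ is real symmetric, so the displayed form is Hermitian in $(f,g)$.

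Conversely, $w>0$ makes $f\mapsto wf$ a bijection of $\mathbb C^{\mathcal X_\Phi}$. Positivity of $T_\Phi$ in $\langle\cdot,\cdot\rangle_{\rm RP}$ is therefore equivalent to positive semidefiniteness of $\mathcal T^\Phi$, which proves the stated equivalence.

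The only step with real content is Step~1. It needs the careful bookkeeping that the product over sites of a one-variable kernel is the Kronecker power of the Toeplitz matrix on $\Phi$, with the multi-index conventions matching. Everything else is a direct algebraic manipulation of finite sums.
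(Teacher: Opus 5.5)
Your route is the paper's route. You factor the slice kernel $K(\phi,\psi)$ as a Kronecker product of one-bond Toeplitz matrices, use $\CBcond{v_0}{N}$ at each spatial site, and pass to $T_\Phi$ through the diagonal substitution $h=wf$.

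There is, however, an error in the final equality of your Step~3 display. The unwound sum, which is correct, equals
\[
\sum_{\phi,\psi}\overline{f(\psi)}\,w(\psi)\,K(\phi,\psi)\,w(\phi)\,g(\phi)
=\sum_{\phi,\psi}\overline{f(\psi)}\,\mathcal T^\Phi(\psi,\phi)\,g(\phi)
=\sum_{\phi,\psi}\overline{(wf)(\psi)}\,K(\phi,\psi)\,(wg)(\phi).
\]
Your right-hand side $\sum\overline{(wf)(\psi)}\,\mathcal T^\Phi(\psi,\phi)\,(wg)(\phi)$ carries an extra factor $w(\psi)w(\phi)$. Writing $W=\mathrm{diag}(w)$, it is the sesquilinear form of $W^2KW^2$ rather than $WKW$. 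You applied the weight twice: once through $f\mapsto wf$ and once through $\mathcal T^\Phi=WKW$.

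The repair is immediate. Either keep $wf$ and use the matrix $K$, applying Step~1 to $h=wf$; this is exactly the paper's computation. Or keep $\mathcal T^\Phi$ and apply Step~2 directly to $f$.

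In the corrected form, the Gram matrix of $(f,g)\mapsto\langle f,T_\Phi g\rangle_{\rm RP}$ is literally $\mathcal T^\Phi$, since $\mathcal T^\Phi(\psi,\phi)=w(\psi)^2\,(T_\Phi)_{\psi,\phi}$. So self-adjointness and the ``equivalently'' clause follow directly from real symmetry of $\mathcal T^\Phi$. The bijection $f\mapsto wf$ is needed only to pass between $K\succeq0$ and $\mathcal T^\Phi\succeq0$.

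All your conclusions survive the correction, so the proof is complete once that identity is fixed.
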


\begin{proof}
Self-adjointness in $\langle\cdot,\cdot\rangle_{\rm RP}$ follows from
symmetry of $K$ and a short calculation:
$\langle f,T_\Phi g\rangle_{\rm RP}=\langle T_\Phi f,g\rangle_{\rm RP}$.
For positivity, set $h(\phi):=w(\phi)f(\phi)$.  Then
\[
\langle f,T_\Phi f\rangle_{\rm RP}
=\sum_{\psi,\phi}\overline{f(\psi)}\,w(\psi)^2\,
 K(\phi,\psi)\,\frac{w(\phi)}{w(\psi)}\,f(\phi)
=\sum_{\phi,\psi}\overline{h(\psi)}\,K(\phi,\psi)\,h(\phi).
\]
Because the temporal action is a sum over sites, the slice kernel
factorizes over the spatial box,
\[
K(\phi,\psi)=\exp[-S_{\rm temp}(\phi,\psi)]
=\prod_{x\in B_L}K\bigl(\phi(x)-\psi(x)\bigr),
\]
so on the factored configuration space $\mathcal{X}_\Phi=\Phi^{B_L}$ the
slice matrix is a Kronecker product of one-bond matrices, one factor per
spatial site,
\[
\bigl(K(\phi,\psi)\bigr)_{\phi,\psi\in\mathcal{X}_\Phi}
=\bigotimes_{x\in B_L}\bigl(K(a-b)\bigr)_{a,b\in\Phi}.
\]
Each factor $(K(a-b))_{a,b\in\Phi}$ is positive semidefinite by
$\CBcond{v_0}{N}$, and a Kronecker product of positive semidefinite
matrices is positive semidefinite, so the displayed quadratic form is
nonnegative.  This is the standard
transfer-matrix positivity step for reflection-positive nearest-neighbor
models in the Osterwalder--Schrader/Fr\"ohlich--Israel--Lieb--Simon framework
\cite{OS1973,OS1975,FIS1978,Glimm1987}; Theorem~\ref{thm:rp-discrete}
supplies the finite-box half-space reflection-positivity input, and
$\CBcond{v_0}{N}$ supplies the one-bond positive definiteness used at
each spatial site.
Positivity of the symmetrized kernel
$\mathcal{T}^{\Phi}(\phi,\psi)=w(\phi)K(\phi,\psi)w(\psi)$ follows by
the same change of variables $h(\phi)=w(\phi)f(\phi)$.  The argument is
finite-box and finite-alphabet; no thermodynamic limit $L\to\infty$ is
taken.
\end{proof}

Corollary~\ref{cor:transfer-psd} is a finite-box, finite-alphabet
statement: on each spatial box $B_L$ the operator $T_\Phi$ is a positive
self-adjoint endomorphism of the RP slice space
$(\mathbb{C}^{\mathcal{X}_\Phi},\langle\cdot,\cdot\rangle_{\rm RP})$, and
composing it along the three internal temporal bonds of the positive
half-lattice gives the finite-volume transfer-matrix product underlying
Theorem~\ref{thm:rp-discrete}.  Spectral gaps, the limit $L\to\infty$,
and passage to a continuum theory are deferred (Section~\ref{sec:outlook}).

\subsection{Finite-box transfer kernel}
\label{subsec:transfer}

Because the action~\eqref{eq:cosh-action} depends only on field
gradients, configurations related by a global additive shift
$\phi\mapsto\phi+c$ have the same action.  For transfer-kernel
statements we use the same single global pin~\eqref{eq:single-pin} as in
the continuous pinned measure: $\phi(z_\ast)=0$ at a single reference
site $z_\ast=(x_\ast,t_\ast)$.  Slices not containing
$z_\ast$ are unpinned spatial-field configurations.  The one-step
transfer kernel below is therefore defined on adjacent slice pairs
$(\phi,\psi)$ modulo a common additive constant,
$(\phi,\psi)\sim(\phi+c,\psi+c)$.  This quotient is the slice-level
version of the same global zero-mode removal; it is not a per-slice
pinning and does not impose $\phi(x_\ast)=\psi(x_\ast)=0$ at every time.

For the finite spatial box $B_L=\{-L/2,\ldots,L/2-1\}^3\subset\mathbb Z^3$
with $x_\ast\in B_L$, let $\phi,\psi:B_L\to\mathbb R$ denote adjacent
temporal slices, considered as an ordered pair modulo the common shift
above.  One may choose a representative such as $\phi(x_\ast)=0$, leaving
$\psi(x_\ast)$ free as the relative temporal value.  Spatial bonds are
taken with periodic boundary conditions, or equivalently with the sum
restricted to nearest-neighbor bonds entirely contained in $B_L$; either
convention yields the same kernel statements below.  Define the spatial
action on one slice by
\[
S_{\rm sp}(\phi)
=\sum_{x\in B_L}\sum_{\hat\mu\ \text{spatial}}
\left(\cosh(\Delta_{\hat\mu}\phi(x))-1\right),
\]
the one-step temporal action by
\[
S_{\rm temp}(\phi,\psi)
=\sum_{x\in B_L}\left(\cosh(\psi(x)-\phi(x))-1\right),
\]
and the standard transfer kernel with spatial half-weights by
\begin{equation}
\label{eq:transfer-kernel}
\mathcal T(\phi,\psi)
=\exp\!\left(-\tfrac12 S_{\rm sp}(\phi)\right)\,
 \exp\!\left(-S_{\rm temp}(\phi,\psi)\right)\,
 \exp\!\left(-\tfrac12 S_{\rm sp}(\psi)\right).
\end{equation}
This is the usual product form for nearest-neighbor reflection-positive
lattice models~\cite{FIS1978,Glimm1987}: each spatial slice carries half
its bond action, and the temporal bonds between $\phi$ and $\psi$ enter
at full weight.  The spatial terms are invariant under independent slice
shifts, while $S_{\rm temp}$ is invariant under the common shift
$(\phi,\psi)\mapsto(\phi+c,\psi+c)$; hence
\eqref{eq:transfer-kernel} descends to the adjacent-pair quotient above.

\begin{lemma}[Pointwise positivity and symmetry of the finite-box transfer kernel]
\label{thm:transfer}
For every finite box $B_L$ with representative-fixing site $x_\ast$, the
kernel $\mathcal T(\phi,\psi)$ in~\eqref{eq:transfer-kernel} is strictly
positive and symmetric as a function of the two slice configurations:
\[
\mathcal T(\phi,\psi)>0,\qquad \mathcal T(\phi,\psi)=\mathcal T(\psi,\phi).
\]
Moreover $S_{\rm temp}(\phi,\psi)=0$ if and only if $\phi=\psi$ on
$B_L$.
\end{lemma}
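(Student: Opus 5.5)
The plan is to read everything directly off the explicit product form~\eqref{eq:transfer-kernel}; no reflection-positivity input is needed.

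\emph{Strict positivity.} For real-valued slice fields on the finite box $B_L$, both $S_{\rm sp}(\phi)$ and $S_{\rm temp}(\phi,\psi)$ are finite sums of finite real numbers $\cosh(\cdot)-1$. So $\mathcal T(\phi,\psi)$ is a product of three exponentials of real numbers, each of which is strictly positive. This holds for either spatial boundary convention, since only the index set of the sums changes. It is also compatible with the adjacent-pair quotient, because the paper already notes that every factor is invariant under the common shift $(\phi,\psi)\mapsto(\phi+c,\psi+c)$. The statement is therefore independent of the chosen representative, for example $\phi(x_\ast)=0$.

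\emph{Symmetry.} Swapping $\phi$ and $\psi$ exchanges the two spatial half-weights $\exp(-\tfrac12 S_{\rm sp}(\cdot))$, and their product is unchanged. In the temporal factor, each summand $\cosh(\psi(x)-\phi(x))-1$ equals $\cosh(\phi(x)-\psi(x))-1$ because $\cosh$ is even, so $S_{\rm temp}(\phi,\psi)=S_{\rm temp}(\psi,\phi)$. Hence $\mathcal T(\phi,\psi)=\mathcal T(\psi,\phi)$. The swap also maps the common-shift equivalence class to itself, so symmetry descends to the quotient.

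\emph{Zero set of $S_{\rm temp}$.} By Lemma~\ref{thm:euclidean} and (J1)/(J3), each summand satisfies $\cosh u-1=J(e^u)\ge 0$, with equality iff $e^u=1$, that is, iff $u=0$. A finite sum of nonnegative terms vanishes iff every term vanishes, so $S_{\rm temp}(\phi,\psi)=0$ iff $\psi(x)=\phi(x)$ for all $x\in B_L$.

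The only delicate point is the meaning of ``$\phi=\psi$'' on the quotient. If $(\phi,\psi)\sim(\phi+c,\psi+c)$, then $\psi-\phi$ is unchanged, so the condition $\phi=\psi$ is well defined on classes. With this remark, I expect no step to be a genuine obstacle.
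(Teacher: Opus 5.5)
Your proposal is correct and follows the paper's proof: strict positivity comes from the exponential form, symmetry from evenness of $\cosh$, and the zero-set claim from $\cosh u-1\ge 0$ with equality iff $u=0$ at each site. Your remarks on the nonnegativity of the summands and on well-definedness under the common-shift quotient make explicit what the paper's three-line proof leaves implicit.
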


\begin{proof}
Strict positivity follows from positivity of the exponential.
Symmetry follows because $\cosh$ is even:
$\cosh(\psi-\phi)-1=\cosh(\phi-\psi)-1$.
The zero-set statement follows from
$\cosh u-1=0\iff u=0$ at every site.
\end{proof}

Positivity and symmetry of $\mathcal T$ are pointwise properties of the
kernel on unpinned continuous slice fields; for the discrete-field model
it is Corollary~\ref{cor:transfer-psd}, not entrywise positivity of
$\mathcal T$ alone, that supplies the operator-theoretic transfer-matrix
positivity on
$(\mathbb{C}^{\mathcal{X}_\Phi},\langle\cdot,\cdot\rangle_{\rm RP})$
once $\CBcond{v_0}{N}$ holds.  Spectral gaps and persistence of the
transfer spectrum under the limits $L\to\infty$ or $v_0\to 0$,
$N\to\infty$ remain open (Sections~\ref{sec:limits}
and~\ref{sec:outlook}).

\section{Discussion and Open Problems}
\label{sec:limits}

\subsection{Results established}

Within the scope of Section~\ref{sec:intro}, the paper establishes the
following; the main result is item~(i), the others being elementary or
standard.
\begin{enumerate}[nosep]
  \item Bochner failure and discrete-field reflection positivity
        (the main contribution).
        Cosh--Bochner positivity fails for the noncompact
        continuous kernel: an interval certificate gives
        $\widetilde K(3)<0$
        (Proposition~\ref{prop:CB-fails}), so reflection positivity is not
        established here for the continuous noncompact model by the
        standard Bochner route.  For the finite-alphabet modification,
        reflection positivity holds whenever a finite crossing-bond
        matrix is positive semidefinite
        (Theorem~\ref{thm:rp-discrete}); for $v_0\in\{1.2,1.5,2.5\}$ and
        all $N$, this hypothesis is discharged by a rigorous
        diagonal-dominance certificate
        (Theorem~\ref{thm:gershgorin}, Corollary~\ref{cor:rp-verified};
        the threshold is discussed in Remark~\ref{rem:threshold}).  The discrete
        one-step transfer operator is positive in the RP slice inner
        product (Corollary~\ref{cor:transfer-psd}), and the continuous
        pinned finite-box transfer kernel is pointwise positive and
        symmetric (Lemma~\ref{thm:transfer}).
  \item Complex spectral representation (elementary, period-agnostic).
        Over $\mathbb{R}$ a cyclic shift of period $>2$ has no complete
        one-dimensional eigendecomposition; over $\mathbb{C}$ the DFT
        diagonalizes it (Propositions~\ref{thm:complex_spectral}
        and~\ref{prop:dft8}).
  \item Euclidean action identity (elementary).
        $J(e^\varepsilon)=\cosh\varepsilon-1$ identifies $J$-cost with
        the Euclidean action density (Lemma~\ref{thm:euclidean}), with
        quadratic control (Lemma~\ref{thm:quadratic}); the Wick step
        $\cosh\tau=\cos(i\tau)$ is formal.
\end{enumerate}
The ancillary algebraic observations of Appendix~\ref{app:ancillary} (a
sector-measure extension under an assumed two-branch calibration
(Theorem~\ref{thm:born}), non-additivity
of the joint-cost combiner (Proposition~\ref{thm:nonadd}), a one-tick
finite-difference bound (Lemma~\ref{lem:schrodinger}), and the value
$J(\varphi)=(\sqrt5-2)/2>0$ (Proposition~\ref{thm:jphi-positive})) are
elementary or standard and are not used in the main result.

\subsection{Open problems}
\label{sec:openproblems}

\begin{enumerate}[nosep]
  \item Continuous noncompact reflection positivity: the
        standard Bochner route fails for the present kernel; whether any
        modification of the continuous noncompact measure $\mu_L$ admits
        OS reflection positivity is open.
  \item Continuum QFT and OS reconstruction: constructing the
        continuum Wightman axioms, a Haag--Kastler net, an
        Osterwalder--Schrader reconstruction for the discrete-field
        model, or a full path-integral measure~\cite{Glimm1987} from the
        present results is open.
  \item Continuum scaling: Open Problem~\ref{op:continuum-scaling}
        asks whether the nonlinear gradient action admits a nontrivial
        four-dimensional scaling limit under some (possibly non-canonical)
        lattice-spacing family, or flows to the Gaussian free-field fixed
        point under canonical scaling.
  \item Transfer-operator spectrum in the thermodynamic limit:
        Corollary~\ref{cor:transfer-psd} gives a positive one-step
        transfer operator on each finite box, but spectral gaps and
        persistence of the transfer spectrum under $L\to\infty$ remain
        open (Section~\ref{sec:outlook}).
  \item Specific scattering amplitudes: the present results
        provide structural ingredients for a finite lattice model but do
        not compute individual S-matrix elements, which require a fully
        developed Hilbert-space formulation and an interpolating field
        formalism.
\end{enumerate}

\subsection{The nonlinear gradient-action continuum question}
\label{sec:continuum-limit}

The reciprocal-cost action has the expansion
\[
\cosh\varepsilon-1
:=\frac{\varepsilon^2}{2}+\frac{\varepsilon^4}{24}
+\frac{\varepsilon^6}{720}+\cdots ,
\]
with the quadratic remainder controlled by Lemma~\ref{thm:quadratic}.
The quartic term is not a local scalar $\phi^4$ potential; it is a
quartic gradient interaction $(\Delta\phi)^4/24$.  Thus the relevant
continuum question is not ordinary four-dimensional $\phi^4$ triviality,
but whether the all-orders nonlinear gradient action
$V(\Delta\phi)=\cosh(\Delta\phi)-1$ has a non-Gaussian scaling limit or
instead flows to the Gaussian free-field fixed point.

A rigorous continuum-limit problem requires specifying a
lattice-spacing family $\{S_a\}_{a>0}$, a field-rescaling convention,
and a normalization of the gradient terms.  Concretely, after introducing
a lattice spacing $a$, one must decide how the lattice difference
$\Delta_\mu\phi$ is related to $a\,\partial_\mu\phi_a$ and how the field
is rescaled as $a\to0$.  One possible rescaled bond contribution has the
form
\[
\cosh\!\bigl(Z(a)\,a\,\partial_\mu\phi_a\bigr)-1
\;=\;\tfrac12 Z(a)^2 a^2 (\partial_\mu\phi_a)^2
\;+\;\tfrac{1}{24}Z(a)^4 a^4(\partial_\mu\phi_a)^4
\;+\;\cdots .
\]
Under the canonical scalar-field scaling in four Euclidean dimensions,
the quadratic gradient term is the leading kinetic term, while higher
gradient interactions such as $(\partial\phi)^4$ carry negative coupling
dimension and are irrelevant by naive power counting~\cite{Glimm1987}.
Thus a Gaussian fixed point is the default expectation unless a
non-canonical scaling prescription specifies a different normalization
that promotes the higher gradient terms to marginal or relevant
operators.

\begin{openproblem}[Continuum scaling of the nonlinear gradient action]
\label{op:continuum-scaling}
Does the all-orders nearest-neighbor gradient action
$V(\varepsilon)=\cosh\varepsilon-1$, with the 8-tick temporal structure
of the present model, admit a nontrivial four-dimensional continuum
limit under some (possibly non-canonical) lattice-spacing family
$\{S_a\}_{a>0}$, or does it flow to the Gaussian free-field fixed point
under canonical scaling?
\end{openproblem}

If a non-canonical scaling prescription changes the
renormalization behavior, the result would provide a candidate
interacting continuum limit built from the cost functional.  If canonical
scaling is the only admissible prescription, the construction still
identifies the obstruction: the finite-lattice theory is fixed by $J$,
while the continuum fixed point is Gaussian by power counting.  Any
discrete-field reflection-positivity claim at fixed $a$ still requires
the finite $\CBcond{v_0}{N}$ check from
Section~\ref{subsec:discrete-rescue}.

\subsection{Outlook}
\label{sec:outlook}

The paper stops at the finite-volume statements proved above.  The
finite-box Osterwalder--Schrader data are available: for the
finite-alphabet model, Theorem~\ref{thm:rp-discrete} gives RP under the
finite matrix hypothesis, Corollary~\ref{cor:rp-verified} verifies that
hypothesis for the certified spacings, and
Corollary~\ref{cor:transfer-psd} gives the corresponding positive
transfer matrix on $\Phi^{|B_L|}$ in the RP inner product.  Full
Osterwalder--Schrader reconstruction and continuum or thermodynamic
scaling are not results of this paper.  Two natural continuations are
therefore left to separate work; no new theorem is asserted in this
subsection.

\paragraph{Finite-volume OS data and transfer-operator formulation.}
Corollary~\ref{cor:transfer-psd} and Theorem~\ref{thm:rp-discrete}
provide the basic Osterwalder--Schrader reflection-positivity input and
a positive one-step transfer operator on the discrete-field slice space
$(\mathbb{C}^{\mathcal{X}_\Phi},\langle\cdot,\cdot\rangle_{\rm RP})$
for each finite box~$B_L$.  A follow-up paper can develop this into a
systematic finite-volume OS / transfer-matrix account: Schwinger
functions on the 8-tick cylinder, the product of transfer matrices along
the positive half-lattice, and the Hamiltonian read off from the temporal
step on $\mathbb{C}^{\mathcal{X}_\Phi}$.  That program would still take
place on a finite alphabet and a finite spatial box unless and until a
controlled continuum limit is established.

\paragraph{Continuum scaling and fixed-point behavior.}
Open Problem~\ref{op:continuum-scaling} asks whether the all-orders
gradient action $V(\varepsilon)=\cosh\varepsilon-1$ admits a
nontrivial four-dimensional continuum limit under some (possibly
non-canonical) scaling family, or instead flows to a Gaussian free field
under canonical scaling.  A separate scaling paper would specify a
lattice-spacing family $\{S_a\}_{a>0}$, a field-rescaling convention,
and then either prove a Gaussian limit theorem, prove a
triviality/obstruction theorem, or identify a nonstandard scaling in
which higher gradient terms remain relevant.  Discrete-field reflection
positivity at fixed spacing, as in Corollary~\ref{cor:rp-verified}, is
compatible with such a program but does not replace it.

\section{Conclusion}
\label{sec:conclusion}

In its action form $J(e^\varepsilon)=\cosh\varepsilon-1$, the reciprocal
cost fixes a classical Euclidean lattice statistical-mechanical model
whose reflection-positivity behavior is sharply two-sided.  For the
noncompact continuous temporal kernel the standard Bochner route
fails, since the relevant Fourier transform is negative
($\widetilde K(3)<0$, Lemma~\ref{thm:fourier-negative}); restricting
the field to a finite symmetric alphabet $\Phi(v_0,N)$ instead yields
Osterwalder--Schrader reflection positivity for this finite-alphabet
variant wherever the finite
crossing-bond matrix is positive semidefinite
(Theorem~\ref{thm:rp-discrete}), which a diagonal-dominance certificate
discharges uniformly in volume at the spacings $v_0\in\{1.2,1.5,2.5\}$
(Theorem~\ref{thm:gershgorin}, Corollary~\ref{cor:rp-verified}).  Pairing
this concrete obstruction with a certified finite-alphabet positive
result is the contribution of the paper.

These are finite-volume statements; they do not by themselves provide a
continuum Wightman theory, an Osterwalder--Schrader reconstruction, LSZ
scattering, or a mass gap.  The continuum limit, continuous noncompact
reflection positivity, the thermodynamic-limit transfer-operator
spectrum, and the continuum-scaling question (Open
Problem~\ref{op:continuum-scaling}) remain the substantive open
directions, collected in Section~\ref{sec:openproblems}.

\section*{Supplementary Material}

The supplementary material for this manuscript is provided as the
ancillary file \path{supplementary_material.pdf}.  It collects the
software, dependency, and repository/release information for the
computational certificates; a complete proof of the conditional
discrete-field reflection-positivity theorem
(Theorem~\ref{thm:rp-discrete}); reproducible numerical checks of the
Bochner failure and of the finite crossing-bond matrices; the full
motivation behind the sector-measure extension of
Appendix~\ref{app:ancillary}; the Lean formalization map and build
instructions; and alternative-text descriptions of the figures.  The
accompanying source files are provided as ancillary files: the
interval-certificate script \path{bochner_interval_certificate.py} (the
rigorous ball-arithmetic enclosure of $\widetilde K(3)$ used in
Lemma~\ref{thm:fourier-negative}, Appendix~\ref{app:fourier-cert}); the
numerical-check script \path{cb_checks.py} (the floating-point Bochner
samples and crossing-bond eigenvalue scans of
Appendix~\ref{app:reproducibility}); the dependency lockfile
\path{requirements.txt}; and the alternative-text file
\path{alt_text.txt}.  All supplementary material is provided for
reproducibility and accessibility; it is not relied upon in any proof, as
every result is established in the text.

\begin{acknowledgments}
The authors thank Milan Zlatanovi\'c, Amir Rahnamai Barghi, Sebastian
Pardo-Guerra, Philip Beltracchi, Anil Thapa, and Elshad Allahyarov for
collaboration on related work.  This research received no external
funding.
\end{acknowledgments}

\section*{Author Declarations}

\subsection*{Conflict of Interest}
The authors have no conflicts to disclose.

\subsection*{Author Contributions}
\noindent\textbf{Jonathan Washburn}: Conceptualization (lead); Formal
analysis (equal); Investigation (equal); Methodology (lead); Software
(lead); Validation (equal); Writing -- original draft (lead).
\textbf{Megan Simons}: Formal analysis (equal); Investigation (equal);
Validation (equal); Visualization (lead); Writing -- review \& editing
(lead).
Both authors have read and approved the submitted version of the
manuscript.

\section*{Data Availability}
Data sharing is not applicable to this article because no new empirical
data were created or analyzed in this study.  The supporting files for
the reported results (the interval-certification script for the Bochner
failure, Lemma~\ref{thm:fourier-negative}, the auxiliary numerical-check
script, and their pinned dependency lockfile) are provided as
supplementary files.  The Lean formalization files for the elementary
algebraic claims are available in the public repository at
\url{https://github.com/jonwashburn/shape-of-logic}.

\appendix

\section{Interval Certificate for the Fourier Transform}
\label{app:fourier-cert}

This appendix gives the rigorous sign certificate used in
Proposition~\ref{prop:CB-fails}.  Write
\[
  \widetilde K(\xi)
  =\int_{\mathbb R}e^{-(\cosh u-1)}e^{-i\xi u}\,du
  =2\int_0^\infty e^{1-\cosh u}\cos(\xi u)\,du .
\]

\begin{lemma}[Certified negativity at $\xi=3$]
\label{thm:fourier-negative}
For $K(u)=\exp[-(\cosh u-1)]$,
\[
  \widetilde K(3)\in
  [-0.004817599594096,\,-0.004817599594055].
\]
In particular, $\widetilde K(3)<0$, so $K$ is not positive definite on
$\mathbb R$.
\end{lemma}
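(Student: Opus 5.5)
We plan to certify the enclosure by rigorous quadrature of the one-sided integral
\[
\widetilde K(3)=2\int_0^\infty e^{1-\cosh u}\cos(3u)\,du ,
\]
which is the representation displayed just above. The integrand is entire and decays doubly exponentially, so the argument splits into a compact part handled by interval arithmetic and a tail controlled in closed form.

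\emph{Step 1 (truncation).} Fix a cutoff $U$, say $U=6$. For $u\ge U$ we have $\cosh u-1\ge \tfrac12 e^{u}-1$, so the tail is bounded by
\[
2\int_U^\infty e^{1-\cosh u}\,du\le 2e\int_U^\infty e^{-e^u/2}\,du\le 2e\cdot 2e^{-U}e^{-e^U/2},
\]
using $e^{-e^u/2}\le e^{-U}\cdot e^{u}\,e^{-e^u/2}$ and integrating exactly. At $U=6$ this is below $10^{-80}$, which is negligible against the width $4\times10^{-14}$ of the target interval.

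\emph{Step 2 (compact part).} On $[0,U]$ we use a validated method. The preferred choice is ball arithmetic: Arb's certified Gauss--Legendre integration of the analytic integrand, whose error bounds come from the integrand's magnitude on Bernstein ellipses. Since $e^{1-\cosh z}\cos 3z$ is entire, these bounds are explicit. An equivalent alternative is to subdivide $[0,U]$ into subintervals and apply a Taylor model of fixed degree on each, with the remainder bounded by interval evaluation of a high derivative, computed by automatic differentiation in interval arithmetic. Either route yields a rigorous enclosure $I$ of the truncated integral. Adding the tail ball from Step 1 gives an enclosure of $\widetilde K(3)$, which we then check lies inside $[-0.004817599594096,\,-0.004817599594055]$. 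As an independent consistency check, the closed form $\widetilde K(3)=2e\,K_{3i}(1)$ from Proposition~\ref{prop:CB-fails} can be evaluated with certified Bessel routines for imaginary order.

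\emph{Step 3 (conclusion).} Since the upper endpoint is negative, $\widetilde K(3)<0$. Because $K\in L^1\cap C$, Bochner's theorem as in Definition~\ref{def:CB} then shows that $K$ is not positive definite.

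The main obstacle is numerical cancellation rather than analysis. The integral has size $10^{-3}$ while the integrand is $O(1)$ and oscillates, so relative error bounds must be tight to about $10^{-12}$. Naive subdivision with crude derivative bounds would need impractically many pieces. This is why we would first try Arb's analytic-integrand quadrature at a working precision of about 100 bits, checking that the returned radius is below $10^{-14}$, and fall back to finer subdivision only if that radius is too large.
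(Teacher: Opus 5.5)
Your proposal is correct and follows the paper's proof essentially step for step. The paper likewise encloses the truncated one-sided integral $2\int_0^5 e^{1-\cosh u}\cos(3u)\,du$ by Arb's validated adaptive quadrature for analytic integrands (128 bits, tolerance $10^{-30}$). It bounds the tail by $\frac{4}{e^5}\exp(1-e^5/2)<5\times10^{-34}$, which is exactly your bound $4e^{1-U-e^U/2}$ at $U=5$; the paper obtains it by linearizing $e^u\ge e^5(1+u-5)$ rather than by your substitution $v=e^u$. It then concludes via Bochner exactly as you do, so your Taylor-model fallback and the $2e\,K_{3i}(1)$ cross-check are optional extras rather than needed ingredients.
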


\begin{proof}
The interval certificate has two parts.  First, an outward-rounded
validated quadrature on $[0,5]$ gives
\begin{equation}
\label{eq:finite-interval-certificate}
2\int_0^5 e^{1-\cosh u}\cos(3u)\,du
\in[-0.004817599594096,\,-0.004817599594055].
\end{equation}
The computation was carried out with ball arithmetic (an
interval-arithmetic implementation using midpoint--radius
representations~\cite{Johansson2017}): the interval
$[0,5]$ is subdivided adaptively, the analytic integrand
$e^{1-\cosh z}\cos(3z)$ is enclosed on each complex integration box by
outward-rounded elementary interval operations, and the quadrature
remainder is bounded by the radius returned by the validated integration
routine.  A compact reproducibility transcript is
\[
\begin{array}{@{}ll@{}}
\text{integrand} & f(z)=\exp(1-\cosh z)\cos(3z),\\
\text{interval} & [0,5],\\
\text{working precision} & 128\ \text{bits},\\
\text{absolute tolerance} & 10^{-30},\\
\text{returned enclosure for }2\int_0^5f(u)\,du &
[-0.004817599594096,\,-0.004817599594055].
\end{array}
\]
Such ball-arithmetic quadrature is a standard validated-integration
method~\cite{Johansson2017}.

Second, the omitted tail is far smaller than the displayed interval
width.  For $u\ge5$, $\cosh u\ge e^u/2$ and
$e^u\ge e^5(1+u-5)$, hence
\[
\begin{aligned}
2\int_5^\infty e^{1-\cosh u}\,du
&\le
2\int_5^\infty \exp\!\left(1-\frac{e^u}{2}\right)\,du  \\
&\le
2e^{1-e^5/2}\int_5^\infty
  \exp\!\left(-\frac{e^5}{2}(u-5)\right)\,du\\
&=\frac{4}{e^5}\exp(1-e^5/2)
<5\times10^{-34}.
\end{aligned}
\]
Adding the symmetric tail enclosure to
\eqref{eq:finite-interval-certificate} leaves the stated interval
unchanged at the displayed precision.  Since a positive-definite
continuous kernel on $\mathbb R$ has nonnegative Fourier transform by
Bochner's theorem, the negative value at $\xi=3$ proves that CB fails.
\end{proof}

The validated enclosure~\eqref{eq:finite-interval-certificate} is
reproduced by the self-contained script
\path{bochner_interval_certificate.py}, which recomputes the
Arb ball with \texttt{python-flint} (version $\ge$~0.7.0, linking
FLINT/Arb $\ge$~3.0), adds the tail bound
above, and verifies that the upper endpoint of the resulting interval is
strictly negative.  Pinned dependency versions are listed in
\path{requirements.txt}.

\begin{remark}[Open neighborhood of negativity]
The function $K$ is integrable, so $\widetilde K$ is continuous.  The
strict inequality in Lemma~\ref{thm:fourier-negative} therefore implies
that $\widetilde K(\xi)<0$ on some nonempty open interval containing
$\xi=3$.
\end{remark}

\section{Reproducibility and Consolidated Notes}
\label{app:reproducibility}

The full conditional discrete-field reflection-positivity proof is
Theorem~\ref{thm:rp-discrete}; the rigorous Fourier-transform sign
certificate is Lemma~\ref{thm:fourier-negative}; the Lean
formalization map and build commands are in Appendix~\ref{app:lean}.
Numerical reproducibility information is recorded below.

\subsection*{Optional numerical checks}

The interval certificate above is the proof of the Bochner failure.  The
auxiliary script \path{cb_checks.py}, provided as an ancillary file,
is only a reproducibility aid for floating-point checks.  It
prints the Bochner samples, scans finite crossing-bond matrices
$K_{\Phi(v_0,N)}$, and writes \path{cb_checks_results.json}.  After downloading the
ancillary files into a working directory, a typical run is
\begin{verbatim}
python3 -m pip install -r requirements.txt
python3 cb_checks.py
\end{verbatim}
The floating-point output is not used as a substitute for
Lemma~\ref{thm:fourier-negative} or for the Gershgorin certificate in
Theorem~\ref{thm:gershgorin}.

\subsection*{Finite-dimensional checks}

The continuous cosh--Bochner hypothesis (Definition~\ref{def:CB}) fails
by the interval-certified enclosure $\widetilde K(3)<0$ of
Lemma~\ref{thm:fourier-negative}; by continuity, the Fourier transform is
negative on a whole neighborhood of that frequency.  Its discrete
counterpart $\CBcond{v_0}{N}$ (Definition~\ref{def:CB-disc}) is a finite
matrix condition on $K_{\Phi(v_0,N)}$.  For
$v_0\in\{1.2,1.5,2.5\}$ it is settled rigorously by the
diagonal-dominance bound~\eqref{eq:dd-bound} of
Theorem~\ref{thm:gershgorin}; other spacings require their own
certificates.

Once $\CBcond{v_0}{N}$ holds, the later finite-volume assertions reduce to
finite tests as well.  A failure of discrete-field reflection positivity
(Theorem~\ref{thm:rp-discrete}) would be witnessed by a finite spatial box
$L$ and an observable $F$ supported on
$\Lambda_L^+=B_L\times\{0,1,2,3\}$ with $\langle\theta F,F\rangle<0$.
For the certified spacings of Corollary~\ref{cor:rp-verified}, a failure
would contradict the closed-form bound~\eqref{eq:dd-bound}; for the
positive transfer operator (Corollary~\ref{cor:transfer-psd}), it would be
a function $f$ with $\langle f,T_\Phi f\rangle_{\rm RP}<0$.

The ancillary algebraic inputs are similarly concrete.  The complex
spectral representation would fail only if a real cyclic shift of period
greater than two admitted a complete real one-dimensional
eigendecomposition, which it does not; the sector-measure extension
(Theorem~\ref{thm:born}) would fail only if a probability measure met its
hypotheses, including the two-branch calibration, yet differed from
$\sum_k|\psi_k|^2$ outside the two-branch family; non-additivity of the
joint-cost combiner would fail only if functions
$g,h:\mathbb R_{>0}\to\mathbb R$ satisfied
$J(ab)+J(a/b)=g(a)+h(b)$ for all $a,b>0$; and the value
$J(\varphi)=(\sqrt5-2)/2$ would fail to be minimal on the
$\varphi$-lattice only if some $\varphi^n$ ($n\ne0$) had strictly smaller
cost.

\FloatBarrier
\section{Lean formalization}
\label{app:lean}

The Lean files provide an independent check of several elementary
algebraic claims; they are not proof dependencies for this paper.  The
repository is \url{https://github.com/jonwashburn/shape-of-logic},
and within it the Lean package and root namespace are both named
\texttt{IndisputableMonolith}.  The single repository and commit
\texttt{ee41fe40e4137659}\allowbreak\texttt{abfb381496e1d3a3ef0b5cd8}
apply throughout.  The headline Lean theorem
\path{distinction_forces_T0_to_T8} formalizes the T0--T8 chain
used in Section~\ref{sec:forcing}; the lattice results in
this paper are downstream of that chain.  The theorem formalizes the
conditional derivation under the RS structural axioms
of~\cite{Washburn2026Axioms,PardoGuerra2026}, not an unconditional
derivation from $J$ alone.

Every mathematical result used in the paper is proved in the text.
Several elementary algebraic facts are additionally verified in the
public repository at the commit above: the cost law and structural inputs T5--T8
(Section~\ref{sec:forcing}), the complex spectral representation
(Proposition~\ref{thm:complex_spectral}), DFT-8 unitarity and
diagonalization (Proposition~\ref{prop:dft8}), the identity
$J(e^\varepsilon)=\cosh\varepsilon-1$ (Lemma~\ref{thm:euclidean}), the
non-additivity of the joint-cost combiner (Proposition~\ref{thm:nonadd}),
and the value $J(\varphi)=(\sqrt5-2)/2>0$
(Proposition~\ref{thm:jphi-positive}, finite arithmetic only).  The
analytic core of the paper consists of text proofs that are not formalized
in Lean: discrete-field reflection positivity
(Theorem~\ref{thm:rp-discrete}), the diagonal-dominance certificate for
$\CBcond{v_0}{N}$ at $v_0\in\{1.2,1.5,2.5\}$
(Theorem~\ref{thm:gershgorin}), the positive discrete transfer operator
(Corollary~\ref{cor:transfer-psd}), pointwise transfer-kernel positivity
and symmetry (Lemma~\ref{thm:transfer}), and the sector-measure extension
under calibration~(d) (Theorem~\ref{thm:born}).  The failure of the
noncompact Bochner route is established by the outward-rounded interval
certificate of Appendix~\ref{app:fourier-cert}; checks of $\CBcond{v_0}{N}$
at other $(v_0,N)$ rest on numerical evidence or a separate certificate
(Appendix~\ref{app:reproducibility}); and a continuum Wightman, LSZ, or
Osterwalder--Schrader reconstruction is not attempted.  The supplementary
material lists the corresponding public Lean modules; these are named
within the root namespace \texttt{IndisputableMonolith}, a legacy artifact
of the repository that carries no mathematical significance.

The computation at the cited commit can be reproduced by checking out the
public repository and invoking \texttt{lake}:
\begin{verbatim}
git clone https://github.com/jonwashburn/shape-of-logic.git
cd shape-of-logic
git checkout ee41fe40e4137659abfb381496e1d3a3ef0b5cd8
lake exe cache get
lake build IndisputableMonolith
\end{verbatim}
For the facts also checked in Lean, the in-text
theorem is the proof of record; a clean Lean build provides an
independent check of the same elementary algebraic fact.  The auxiliary
numerical scripts \path{cb_checks.py} and
\path{bochner_interval_certificate.py}, provided as ancillary files,
are reproducibility aids, not proof dependencies.

\FloatBarrier
\section{Ancillary Algebraic Observations}
\label{app:ancillary}

This appendix collects four scope-delimiting facts that are not used in
the reflection-positivity proof.  They fall into two groups.
Appendix~\ref{sec:born} records a quantum-foundational consistency
statement for sector measures; it is included only to clarify that the
present Gibbs-weight construction does not derive the Born rule.
Appendices~\ref{sec:nonadd}--\ref{sec:spacetime} record elementary
algebraic consequences of the reciprocal-cost framework: non-additivity
of the d'Alembert joint-cost combiner, a finite one-tick comparison with
a formal Schr\"odinger update, and the numerical value of the cost at the
structural scale $\varphi$.  These observations delimit possible
interpretations of the model; none supplies an input to
Theorem~\ref{thm:rp-discrete} or Corollary~\ref{cor:rp-verified}.

\subsection{Sector-measure extension under a two-branch calibration}
\label{sec:born}

This is a quantum-foundational aside, independent of the lattice results
and used nowhere below; the full motivation is deferred to the
supplementary material and only the extension statement is recorded here.
Extend the cost to $\mathbb C$ through the modulus,
$J_{\mathbb C}(z):=J(|z|)$, so that the phase of a complex amplitude does
not enter its Euclidean weight $e^{-J_{\mathbb C}(z)}$.  The result is an
extension/consistency statement, not a derivation of the Born
rule: assuming the two-mode Born calibration
$\mu_\psi(\{1\})=\sin^2\theta$ on
$\psi=(\cos\theta,\sin\theta,0,\ldots,0)$, it forces $\sum_k|\psi_k|^2$ on
all sectors.

\begin{theorem}[Sector-measure extension under an assumed two-branch calibration]
\label{thm:born}
  Let $\psi\mapsto\mu_\psi$ assign to each normalized 8-mode signal a
  probability measure on sectors.  If this assignment satisfies:
  \begin{enumerate}[label=(\alph*)]
    \item normalization: $\mu_\psi(\{0,\ldots,7\}) = 1$,
    \item phase invariance: $\mu_\psi$ is unchanged under
          $\psi_k \mapsto \psi_k \cdot e^{i\theta_k}$,
    \item disjoint additivity: $\mu_\psi(S \cup T) = \mu_\psi(S) + \mu_\psi(T)$ for
          $S \cap T = \varnothing$,
    \item two-branch Born calibration: for every
          $\theta\in[0,\pi/2]$ and
          $\psi=(\cos\theta,\sin\theta,0,\ldots,0)$,
          $\mu_\psi(\{1\})=\sin^2\theta$,
    \item coordinate symmetry: for each $k$,
          $\mu_\psi(\{k\})$ depends only on $|\psi_k|$, and the dependence is
          the same function of the modulus for every coordinate~$k$,
  \end{enumerate}
  then $\mu_\psi(S)=\sum_{k\in S}|\psi_k|^2$ for every normalized
  $\psi$ and every sector $S$.
\end{theorem}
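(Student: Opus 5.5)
By coordinate symmetry~(e) there is a single function $g:[0,1]\to[0,1]$ with $\mu_\psi(\{k\})=g(|\psi_k|)$ for every normalized $\psi$ and every $k$. The plan is to identify $g$ explicitly, $g(r)=r^2$, and then use additivity to pass from singletons to arbitrary sectors.

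\textbf{Step 1: identify $g$.} For $r\in[0,1]$ choose $\theta\in[0,\pi/2]$ with $\sin\theta=r$. The calibration~(d) applied to $\psi=(\cos\theta,\sin\theta,0,\ldots,0)$ gives
\[
g(r)=\mu_\psi(\{1\})=\sin^2\theta=r^2 .
\]
Every modulus $|\psi_k|$ of a normalized signal lies in $[0,1]$, so this fixes $g$ on its whole relevant domain. Consistency checks, such as $g(0)=0$ on the vanishing coordinates, come out automatically.

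\textbf{Step 2: extend to sectors.} For a normalized $\psi$ and a sector $S$, disjoint additivity~(c) applied inductively over the finitely many singletons of $S$ gives
\[
\mu_\psi(S)=\sum_{k\in S}\mu_\psi(\{k\})=\sum_{k\in S}|\psi_k|^2 .
\]
Normalization~(a) is then consistent, since $\sum_{k=0}^{7}|\psi_k|^2=1$. Phase invariance~(b) is implied by~(e) and is not needed separately.

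\textbf{Main obstacle.} The only subtle point is reading~(e) correctly. It is what transports the two-branch data to arbitrary $\psi$ and arbitrary coordinates: it asserts that $\mu_\psi(\{k\})$ depends on $|\psi_k|$ alone, independently of the other entries of $\psi$ and of $k$. So the value $g(r)$ computed from the particular signal $(\cos\theta,\sin\theta,0,\ldots,0)$ at coordinate $1$ applies to every coordinate of every signal with modulus $r$. I would state this reading explicitly, because that is where the calibration's strength enters.
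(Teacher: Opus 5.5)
Your proposal is correct and matches the paper's proof step for step. Coordinate symmetry (e) reduces every singleton value to one weight function of the modulus, the two-branch family in (d) fixes that function as $r^2$ on all of $[0,1]$, and disjoint additivity (c) then gives the formula on arbitrary sectors, with your remark that (b) is subsumed by (e) mirroring the paper's note that (e) is merely consistent with (b).
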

\begin{proof}
  By (a) and (c), $\mu_\psi$ is determined by its singleton values
  $\mu_\psi(\{k\})$.  By (e), whose modulus-only dependence is consistent
  with the phase invariance (b), each singleton value is $w(|\psi_k|)$ for
  a single weight function $w$ independent of $k$.  For the two-mode family
  in (d),
  $|\psi_1|=\sin\theta$ and $\mu_\psi(\{1\})=\sin^2\theta$, and every
  $r\in[0,1]$ equals $\sin\theta$ for some $\theta\in[0,\pi/2]$, so
  $w(r)=r^2$.  Disjoint additivity then gives
  $\mu_\psi(S)=\sum_{k\in S}|\psi_k|^2$.
\end{proof}

Condition (d) is the Born rule on two modes; it is the substantive
input and does not follow from the $J$-cost Gibbs weight.  The
supporting discussion (why the Gibbs weight $e^{-J}$ does not by itself
determine sector probabilities, and a scalar multiplicativity benchmark
reaching the same quadratic rule) is given in the supplementary
material.  Genuine derivations of the Born rule from weaker premises are a
separate literature: Gleason's theorem~\cite{Gleason1957}, the
measurement approach~\cite{Busch2003}, Zurek's envariance~\cite{Zurek2005},
and the Deutsch--Wallace program~\cite{Wallace2012}.

\subsection{Non-additivity of the joint-cost combiner}
\label{sec:nonadd}

For two ledger entries $(a, b)$ the d'Alembert composition law gives the
joint cost
\[
  J(ab) + J(a/b) = 2J(a)J(b) + 2J(a) + 2J(b),
\]
whose cross term $2J(a)J(b)$ is bilinear.  The following elementary
observation records that a bilinear cross term cannot be written
as a sum of one-variable functions.

\begin{proposition}[The joint-cost combiner is not additively separable]\label{thm:nonadd}
  There exist no functions $g, h : \mathbb{R}_{>0} \to \mathbb{R}$ such that
  \[
    J(ab) + J(a/b) = g(a) + h(b) \quad \text{for all } a, b > 0.
  \]
\end{proposition}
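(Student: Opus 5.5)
Argue by contradiction using a mixed second difference, which annihilates every additively separable function. Suppose $g,h$ exist with $J(ab)+J(a/b)=g(a)+h(b)$ for all $a,b>0$. For any $a_1,a_2,b_1,b_2>0$, the combination
\[
\Delta:=F(a_1,b_1)-F(a_1,b_2)-F(a_2,b_1)+F(a_2,b_2),\qquad F(a,b):=J(ab)+J(a/b),
\]
must vanish, since the $g$ and $h$ contributions cancel pairwise. So it suffices to exhibit one quadruple with $\Delta\neq0$.

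To compute $\Delta$, use the composition law \eqref{eq:A2}. It rewrites $F(a,b)=2J(a)J(b)+2J(a)+2J(b)$, whose separable part $2J(a)+2J(b)$ also drops out of $\Delta$. Hence
\[
\Delta=2\bigl(J(a_1)-J(a_2)\bigr)\bigl(J(b_1)-J(b_2)\bigr).
\]
Alternatively, one can compute directly from \eqref{eq:J-unique}: $F(a,b)=\tfrac12(a+a^{-1})(b+b^{-1})-2$, which gives the same product form.

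Now choose $a_2=b_2=1$, so that $J(a_2)=J(b_2)=0$ by (J1), and $a_1=b_1=2$, so that $J(2)=\tfrac14\neq0$. Then $\Delta=2\cdot\tfrac1{16}=\tfrac18\neq0$, which is the desired contradiction. Concretely, $F(2,2)-F(2,1)-F(1,2)+F(1,1)=\tfrac{17}{4}-\tfrac12-\tfrac12-0$ should be checked against the formula. By direct evaluation, $J(4)+J(1)=\tfrac98$, $F(2,1)=2J(2)=\tfrac12$, and the same holds for $F(1,2)$, giving $\tfrac98-1=\tfrac18$, consistent with the formula.

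There is no real obstacle here. The only point needing care is the observation that the mixed difference kills $g(a)+h(b)$ identically, with no regularity assumptions on $g$ or $h$. Because of this, the argument works for arbitrary functions $g,h$, exactly as the statement requires.
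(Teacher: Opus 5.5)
Your mixed-difference argument is correct and is essentially the paper's proof: anchoring the difference at $a_2=b_2=1$ is exactly the paper's substitution $a=1$, $b=1$, and keeping the $F(1,1)=0$ term lets you reach the contradiction with the single evaluation $\Delta=\tfrac18$. The paper instead compares $J(2)J(3)$ with $J(2)^2$. The only blemish is the stray line with $\tfrac{17}{4}$ (that is $4+\tfrac14$, not $F(2,2)=\tfrac98$, which you then compute correctly); delete that line.
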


\begin{proof}
  Suppose $J(ab) + J(a/b) = g(a) + h(b)$.
  Setting $b = 1$: $2J(a) = g(a) + h(1)$, so $g(a) = 2J(a) - h(1)$.
  Setting $a = 1$: $2J(b) = g(1) + h(b)$, so $h(b) = 2J(b) - g(1)$.
  Substituting both into the identity $J(ab) + J(a/b) = 2J(a)J(b) + 2J(a) + 2J(b)$:
  \[
    2J(a)J(b) + 2J(a) + 2J(b) = (2J(a) - h(1)) + (2J(b) - g(1)),
  \]
  which gives $2J(a)J(b) = -(h(1) + g(1))$, a constant independent of $a$ and $b$.
  But $J(2) = 1/4$ and $J(3) = 2/3$, so $J(2) \cdot J(3) = 1/6 \neq 1/16 = J(2) \cdot J(2)$.
  Contradiction.
\end{proof}

This is an elementary algebraic fact about the combiner
$P(u,v)=2uv+2u+2v$, the degree-two symmetric combiner selected by the
d'Alembert classification~\cite{Washburn2026DAlembert}, in which a
nonzero bilinear term obstructs an additive split $g(a)+h(b)$.  The
corresponding Lean fact, that
$\partial^2 P/\partial u\partial v\neq0$ while additive combiners have
vanishing mixed derivative, is recorded in the module historically
named \path{IndisputableMonolith.Foundation.DAlembert.EntanglementGate};
the name is legacy, and the statement is purely a property of a
two-variable polynomial, with no entanglement or Bell-inequality
content.

\subsection{A one-tick finite-difference bound}
\label{sec:schrodinger}

For completeness we record a finite one-step estimate for the DFT-8
update.  The estimate is only a bounded-step comparison with a formal
Schr\"odinger generator; it contains no small parameter and is not used
in the reflection-positivity results.

The recognition operator $\hat{R}$ from Section~\ref{sec:forcing} acts
on 8-mode signals by a single 8-tick update.  In the DFT-8 basis of
Proposition~\ref{prop:dft8}, write
$\psi_k$ for the amplitude in mode $k$ and let
$w_k=\omega^{-k}$ with $\omega=e^{-2\pi i/8}$.  Choose the branch
\[
w_k=\exp(-iE_k\tau_0/\hbar_{\rm RS}),
\qquad E_k\tau_0/\hbar_{\rm RS}\in[-\pi,\pi],
\]
and regard $E_k$ as the corresponding finite-mode Hamiltonian assignment.
The discrete update and formal Schr\"odinger step are
\[
  (\hat R\psi)_k = w_k\psi_k,
  \qquad
  \Bigl(-\frac{i}{\hbar_{\rm RS}}\hat H\psi\Bigr)_k
  = -\frac{iE_k}{\hbar_{\rm RS}}\psi_k.
\]
The finite-difference remainder in mode $k$ is
\[
  R_k := \frac{(\hat R\psi)_k-\psi_k}{\tau_0}
          + \frac{iE_k}{\hbar_{\rm RS}}\psi_k
       = \frac{w_k-1}{\tau_0}\psi_k + \frac{iE_k}{\hbar_{\rm RS}}\psi_k.
\]

\begin{lemma}[Uniform one-tick finite-difference remainder bound]\label{lem:schrodinger}
  For each DFT mode $k \in \{0,\ldots,7\}$ and amplitude $\psi_k \in \mathbb{C}$,
  with $\tau_0=1$ and the branch convention $|E_k/\hbar_{\rm RS}|\le \pi$,
  \[
    |R_k| \leq 6\,|\psi_k|.
  \]
\end{lemma}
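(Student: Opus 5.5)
The plan is to reduce the estimate to a bound on two scalar quantities. Factor out $\psi_k$:
\[
|R_k| \;=\; \Bigl|\frac{w_k-1}{\tau_0}+\frac{iE_k}{\hbar_{\rm RS}}\Bigr|\,|\psi_k| .
\]
With $\tau_0=1$, the task becomes showing that the scalar prefactor is at most $6$, uniformly in $k$.

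Set $\alpha:=E_k\tau_0/\hbar_{\rm RS}$. The branch convention gives $\alpha\in[-\pi,\pi]$, and $w_k=e^{-i\alpha}$. The prefactor is then $|e^{-i\alpha}-1+i\alpha|$. Apply the triangle inequality to split it into two terms:
\begin{itemize}[nosep]
\item the chord term, $|e^{-i\alpha}-1| = 2|\sin(\alpha/2)|\le 2$;
\item the linear term, $|i\alpha|=|\alpha|\le\pi$.
\end{itemize}
Together these give a prefactor of at most $2+\pi<6$.

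One could instead use the sharper Taylor remainder bound $|e^{-i\alpha}-1+i\alpha|\le\alpha^2/2\le\pi^2/2\approx 4.93$. That also lies below $6$, but the crude triangle bound already suffices. I will use the triangle bound because it needs nothing beyond $|e^{i x}|=1$.

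The only points that need care are bookkeeping. First, the choice $\tau_0=1$ must be used consistently in both the difference quotient and the exponent. Second, a suitable branch must exist for every $k$: each $w_k=\omega^{-k}$ is an eighth root of unity, so an $\alpha\in[-\pi,\pi]$ with $e^{-i\alpha}=w_k$ always exists, and the hypothesis is consistent. There is no genuine obstacle. The constant $6$ is simply a convenient integer above $2+\pi$, and the bound holds for every mode $k$ and every amplitude $\psi_k\in\mathbb C$.
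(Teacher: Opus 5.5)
Your proof is correct and matches the paper's argument: the triangle inequality splits $|R_k|$ into the chord term $|w_k-1|\,|\psi_k|\le 2|\psi_k|$ (from $|w_k|=1$) and the linear term $|E_k/\hbar_{\rm RS}|\,|\psi_k|\le\pi|\psi_k|$, giving $(2+\pi)|\psi_k|<6|\psi_k|$. Your side remark that the Taylor bound $|e^{-i\alpha}-1+i\alpha|\le\alpha^2/2\le\pi^2/2$ would also suffice is correct, though not needed.
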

\begin{proof}
  Since $|w_k|=1$, $|w_k\psi_k-\psi_k|\le 2|\psi_k|$.
  By the triangle inequality,
  $|R_k|\le 2|\psi_k| + |E_k/\hbar_{\rm RS}|\,|\psi_k|\le (2+\pi)|\psi_k|
  <6|\psi_k|$,
  using $|E_k/\hbar_{\rm RS}|\le\pi$ from the chosen branch of the DFT
  eigenvalue phase.
\end{proof}

The constant $6$ is a loose ceiling ($2+\pi<6$), and the bound does not
improve with any parameter: the discrete update may differ from a formal
Schr\"odinger step by an order-one amount on every tick.  The Lean module
\path{IndisputableMonolith.Foundation.SchrodingerDerivation} formalizes
only the exact one-tick eigenmode evolution
$(\hat R\psi)_k=\omega^{-k}\psi_k$.  The lattice kinetic term is treated
in Section~\ref{sec:wick}.

\subsection{The value of \texorpdfstring{$J(\varphi)$}{J(phi)}}
\label{sec:spacetime}

This short section records the value of the convex cost $J$ at the
structural scale $\varphi$ and the elementary monotonicity of
$J(\varphi^n)$.  On the $\varphi$-lattice
$\{\varphi^n \mid n \in \mathbb{Z}\}$,
\begin{equation}
  \Delta_\varphi := J(\varphi)
  = \frac{\sqrt{5} - 2}{2} \approx 0.1180.
\end{equation}
The shorthand $\Delta_\varphi$ is the notation used for this quantity
in the Lean formalization (Appendix~\ref{app:lean}).

\begin{proposition}[Positivity and monotonicity of $J(\varphi^n)$]\label{thm:jphi-positive}
  $J(\varphi) > 0$, and for all $n \neq 0$,
  $J(\varphi^n) \geq J(\varphi)$.
\end{proposition}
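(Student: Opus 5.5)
The proof is a direct computation with the closed form $J(x)=\tfrac12(x+x^{-1})-1$ of~\eqref{eq:J-unique}, combined with the action identity of Lemma~\ref{thm:euclidean}.

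\emph{First claim, $J(\varphi)>0$.} From T6, $\varphi^2=\varphi+1$, so $\varphi^{-1}=\varphi-1$ and $\varphi+\varphi^{-1}=2\varphi-1=\sqrt5$. Hence $J(\varphi)=\tfrac12\sqrt5-1=(\sqrt5-2)/2$. This is positive because $5>4$ gives $\sqrt5>2$. The same conclusion also follows from (J1), since $\varphi\neq1$.

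\emph{Second claim, monotonicity.} Write $\varphi^n=e^{n\ell}$ with $\ell:=\log\varphi>0$. Lemma~\ref{thm:euclidean} then gives
\[
J(\varphi^n)=\cosh(n\ell)-1 .
\]
The function $\cosh$ is even and strictly increasing on $[0,\infty)$. For every integer $n\neq0$ we have $|n|\ge1$, so $|n|\ell\ge\ell$, and therefore
\[
\cosh(n\ell)=\cosh(|n|\ell)\ge\cosh\ell .
\]
This is exactly $J(\varphi^n)\ge J(\varphi)$, with equality precisely when $n=\pm1$. Negative $n$ need no separate treatment: they are covered either by the evenness of $\cosh$ or by reciprocity (J4), $J(\varphi^{-n})=J(\varphi^n)$.

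There is no real obstacle in this argument. The only point needing care is to use $\cosh$ monotone on $|n|\ell$, rather than on $n\ell$, so that negative exponents are handled correctly.
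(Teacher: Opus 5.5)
Your proof is correct and follows essentially the same route as the paper. You obtain the value $(\sqrt5-2)/2$ from $\varphi^{-1}=\varphi-1$, and you get the bound from monotonicity of $\tfrac12(\varphi^n+\varphi^{-n})-1$ in $|n|$. Writing this quantity as $\cosh(n\log\varphi)-1$ simply makes explicit the monotonicity in $|n|$ that the paper asserts, citing convexity, without further detail.
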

\begin{proof}
  Using $\varphi^{-1}=\varphi-1$,
  \[
    J(\varphi)
    =\frac{1}{2}\bigl(\varphi+(\varphi-1)\bigr)-1
    =\frac{1}{2}(2\varphi-1)-1
    =\varphi-\frac{3}{2}
    =\frac{\sqrt5-2}{2}>0 .
  \]
  Since $J$ is strictly convex with minimum at $x=1$ ($n=0$) and
  $J(\varphi^n) = \frac{1}{2}(\varphi^n + \varphi^{-n}) - 1$ is
  increasing in $|n|$ for $|n| \geq 1$, the minimum over $n \neq 0$ is at
  $|n| = 1$.
\end{proof}

This records the value of a convex function at the structural scale
$\varphi$, together with monotonicity; the quantity $J(\varphi)$ is a
cost value, not a transfer-operator or continuum spectral gap.


\end{document}